\documentclass[prl,aps,reprint,a4paper,superscriptaddress,floatfix,longbibliography]{revtex4-2}
\usepackage{times}
\usepackage{graphicx}
\usepackage{epsfig}
\usepackage{amsfonts}
\usepackage{amsmath}
\usepackage{amssymb}
\usepackage{dsfont}
\usepackage{amsthm}
\usepackage{color}
\usepackage{comment}
\usepackage{braket}
\usepackage{physics}
\usepackage{multirow}
\usepackage{diagbox}
\usepackage{multirow}
\usepackage{makecell}
\usepackage{enumerate}

\newcommand{\id}{\mathds{1}}

\usepackage{float}
\usepackage{rotating}
\usepackage[short,c3]{optidef}
\usepackage[linesnumbered,ruled]{algorithm2e}
\usepackage{scalerel}[2016/12/29]

\usepackage[colorlinks=true,linkcolor=blue,citecolor=blue,urlcolor=blue]{hyperref}
\usepackage[justification=raggedright,singlelinecheck=false]{caption}
\usepackage{subcaption}
\captionsetup[subfigure]{justification=centering, skip=0pt, labelformat=simple}


\newtheoremstyle{italicheader} 
  {0pt} 
  {0pt} 
  {} 
  {\parindent} 
  {\itshape} 
  {.} 
  { } 
  {\itshape #1~#2} 


\renewenvironment{proof} 
{\par\noindent\hspace{\parindent}\textit{Proof.} } 
{\qed} 

\theoremstyle{italicheader}

\newtheorem{proposition}{Proposition}

\newtheorem{definition}{Definition}

\newtheorem{lemma}{Lemma}


\begin{document}
\renewcommand{\arraystretch}{1.3}


\title{Two fundamental solutions to the rigid Kochen-Specker set problem and the solution to the minimal Kochen-Specker set problem under one assumption}


\author{Stefan Trandafir}
\affiliation{Departamento de F\'{\i}sica Aplicada II, Universidad de Sevilla, E-41012 Sevilla, Spain}

\author{Ad\'an~Cabello}
\email{adan@us.es}
\affiliation{Departamento de F\'{\i}sica Aplicada II, Universidad de Sevilla, E-41012 Sevilla,
Spain}
\affiliation{Instituto Carlos~I de F\'{\i}sica Te\'orica y Computacional, Universidad de
Sevilla, E-41012 Sevilla, Spain}



\begin{abstract}
Recent results show that Kochen-Specker (KS) sets of observables are fundamental to quantum information, computation, and foundations beyond previous expectations. Among KS sets, those that are unique up to unitary transformations (i.e., ``rigid'') are especially important. The problem is that we do not know any rigid KS set in $\mathbb{C}^3$, the smallest quantum system that allows for KS sets. Moreover, none of the existing methods for constructing KS sets leads to rigid KS sets in $\mathbb{C}^3$.
Here, we show that two fundamental structures of quantum theory define two rigid KS sets. One of these structures is the super-symmetric informationally complete positive-operator-valued measure. The other is the minimal state-independent contextuality set. The second construction provides a clue to solve the minimal KS problem, the most important open problem in this field. 
We prove that there is no KS set of 30 elements that can be obtained from the minimal state-independent contextuality set by completing bases and adding elements that are orthogonal to two previous elements. We conjecture that 31 is the solution to the minimal KS set problem.
\end{abstract}


\maketitle


{\em Introduction---}Kochen-Specker (KS) sets \cite{Kochen:1967JMM} have been traditionally used to prove the impossibility of noncontextual hidden-variable models of quantum theory \cite{Kochen:1967JMM}, to produce bipartite perfect quantum strategies that allow two uncommunicated players to win every round of a nonlocal game \cite{Stairs:1983PS,HR83,Cabello:2001PRLa,Cabello:2001PRLb,CinelliPRL2005,YangPRL2005,Aolita:2012PRA,Cabello:2021PRL,Xu:2022PRL,Sheng:2025PRL}, and to experimentally test nature's state-independent contextuality \cite{Cabello:2008PRL,Badziag:2009PRL,Kirchmair:2009NAT,Amselem:2009PRL,D'Ambrosio:2013PRX}. A {\em KS set} is a finite set of rank-one observables $\mathcal{V}$ in a Hilbert space ${\cal H}=\mathbb{C}^d$ of finite dimension $d \ge 3$, which does not admit an assignment $f: \mathcal{V} \rightarrow \{0,1\}$ satisfying $f(u) + f(v) \leq 1$ for $u, v \in \mathcal{V}$ orthogonal, and $\sum_{u \in b} f(u) = 1$ for every orthonormal basis $b \subseteq \mathcal{V}$.

Yu and Oh \cite{Yu:2012PRL} showed that KS sets are {\em not} needed for quantum state-independent contextuality, as simpler sets, called state-independent contextuality (SI-C) sets \cite{Yu:2012PRL,Bengtsson:2012PLA}, are sufficient to prove SI-C. A {\em SI-C set} is a finite set of rank-one observables $\mathcal{V}$ in $\mathbb{C}^d$ of finite dimension $d \ge 3$, for which there is a noncontextuality inequality \cite{Cabello:2008PRL,Kleinmann:2012PRL} that is violated by any quantum state when the measurements are taken from the SI-C set. Every KS set is a SI-C set, but not every SI-C set is a KS set \cite{Yu:2012PRL,Bengtsson:2012PLA}.
It has been proven that, in quantum theory, the SI-C set with the smallest number of elements has $13$ elements and occurs in $\mathbb{C}^3$ \cite{Cabello:2016JPA}. In contrast, the simplest KS set in $\mathbb{C}^3$ {\em known} has $31$ rank-one observables and it has been proven that no KS sets exist in $\mathbb{C}^3$ with less than $24$ rank-one observables \cite{Kirchweger:2023,Li:2024}. In arbitrary $d$, it has been proven \cite{Xu:2020PRL} that the simplest KS set has $18$ observables and occurs in $\mathbb{C}^4$ \cite{Cabello:1996PLA}.

The simplicity of SI-C sets compared to the complexity of KS sets might lead one to think that KS sets are just a historical curiosity after the result of Yu and Oh \cite{Yu:2012PRL}. However, recent results \cite{Xu:2024PRL,Liu:2023XXX,Cabello:2025PRL} have shown that KS sets are important in quantum information, quantum computation, and quantum foundations in their own right. First, because KS sets are {\em necessary} for bipartite perfect quantum strategies \cite{Cabello:2025PRL}. Second, because a quantum correlation $p = \{p(a,b \vert x, y)\}$, where $x$ and $y$ are Alice's and Bob's settings, and $a$ and $b$ are Alice's and Bob's outcomes, is in a face of the nonsignaling polytope with no local points \cite{Goh:2018PRA} if, {\em and only if,} $p$ defines a KS set \cite{Liu:2023XXX,Cabello:2025PRL}. Third, because $p$ has maximum nonlocal content \cite{Elitzur:1992PLA} if, {\em and only if,} $p$ defines a KS set \cite{Liu:2023XXX,Cabello:2025PRL}. Fourth, because there is a bipartite ``all-versus-nothing'' or Greenberger-Horne-Zeilinger-like proof if, {\em and only if,} the underlying strategy defines a KS set \cite{Liu:2023XXX,Cabello:2025PRL}. Fifth because, through the above results, KS sets are related to the solution of the Tsirelson problem \cite{Ji:2021CACM} and to the proof of nonoracular quantum computational advantage in shallow circuits \cite{Bravyi:2018SCI}.

Among KS sets, ``rigid'' KS sets are particularly important. A KS set $\{\ket{{\psi_i}}\}_{i=1}^n$ in a Hilbert space ${\cal H}= \mathbb{C}^d$, with $d \ge 3$, that satisfies the orthogonality and completeness conditions given by an orthogonality graph $G$ (in which vertices represent projectors and edges indicate which ones are mutually orthogonal), is {\em rigid} if any other set of projectors $\{\Pi_i\}_{i=1}^n$ (not necessarily of rank-one) in an arbitrary (but finite) dimensional Hilbert space $\mathbb{C}^D$, with $D \ge d$, that satisfies the same orthogonality and completeness relations given by $G$, can be related to the reference KS set by a unitary operator $U$ such that, for all $i$, 
\begin{equation} \label{UnRep} 
U \Pi_i U^\dagger = \ket{{\psi_i}}\!\bra{{\psi_i}} \otimes \id, 
\end{equation}
where $\id$ is the identity operator.
Sixth, a complete KS set can be Bell self-tested \cite{Yao_self,Supic:2020Q} if, {\em and only if,} the KS set is rigid \cite{Xu:2024PRL}. Seventh, a KS set can be certified using {\em any} state of full rank if, {\em and only if,} the KS set is rigid \cite{Xu:2024PRL}. Eighth, the only known way for self-testing supersinglets of $d$ particles of $d$ levels \cite{Cabello:2002PRL,Cabello:2003JMP,PhysRevA.106.033314} is by using rigid KS sets \cite{Saha:2025XXX}. In fact, following the strategy in \cite{Saha:2025XXX}, rigid KS sets allow us to Bell self-test any $N$-partite state in which, for every bipartition with $N-1$ parties on one partition and one party on the other partition, the $N-1$ parties can predict with certainty the value of all the observables of the KS set corresponding to the other party.


{\em The rigid KS problem---}The problem is that, in ${\cal H}=\mathbb{C}^3$, the smallest quantum system (Hilbert space) where KS sets exist, we do not know any rigid KS set. The original 117-observable KS set~\cite{Kochen:1967JMM}, used in the cover of books \cite{Manin:1981,Redhead:1987}, is not rigid (see Appendix~A). The KS set that has replaced it in the cover of books \cite{Halvorson:2011} and is used in the free-will theorem \cite{CK06,CK09,conway_kochen_2011}, namely, the 33-observable KS set introduced by Peres \cite{Peres:1991JPA}, hereafter called Peres-33, which is the KS set in $\mathbb{C}^3$ with the smallest number of bases known, is not rigid, as shown in \cite{gould2010isomorphism, bengtsson2012gleason,Xu:2024PRL}: it has the same orthogonality graph as a KS set introduced by Penrose \cite{Penrose:2000}, hereafter called Penrose-33, that is not equivalent under unitary transformations. 

Moreover, none of the known methods to construct KS sets \cite{Peres:1993,Cabello:1995JPA,Zimba:1993SHPS,Cabello:1996JPA,Cabello:2005PLA,MatsunoJPA2007,Cabello:2018PRA,Ramanathan:2020Q,Cabello:2021PRL} can produce rigid KS sets in ${\cal H}=\mathbb{C}^3$ (see Appendix~B). 

In this paper, we solve the rigid KS set problem by noticing that two fundamental structures in quantum theory, namely the super-symmetric informationally complete positive-operator-valued measure (super-SIC-POVM; hereafter super SIC) \cite{Zhu:2015AP} and the minimal state-independent contextuality set (hereafter minimal SI-C set; not to be confused with SIC) \cite{Yu:2012PRL,Cabello:2016JPA}, each determines a rigid KS set. Our approach also solves a problem left open in \cite{Xu:2024PRL}, namely, whether or not the KS set with the minimum number of observables {\em known} in $\mathbb{C}^3$ \cite{Peres:1993} is rigid, and provides an unexpected insight on the problem of what is the minimal KS set in ${\cal H}=\mathbb{C}^3$ \cite{Jost:1976,Peres:1988,Peres:1993,Pavicic:2005JPA,Arends:2011,Uijlen:2016,Kirchweger:2023,Li:2024,williams2024maximalnonkochenspeckersetslower}.


{\em Rigid KS set defined by the super SIC---}SIC-POVMs (hereafter just SICs) \cite{Renes2004JMP} are fundamental for many reasons \cite{Fuchs:2017A,DeBrota:2020PRR}. However, among all SICs, the SIC in $\mathbb{C}^3$ is special: $\mathbb{C}^3$ is one of the three cases in which the symmetry groups act transitively on pairs of SIC elements \cite{Zhu:2015AP}. These SICs are covariant with respect to Heisenberg-Weyl groups and their symmetry groups are subgroups of Clifford groups that act transitively on pairs of SIC projectors. However, only in $\mathbb{C}^3$ the SIC is covariant with respect to the Clifford group. For this reason, the SIC in $\mathbb{C}^3$ is called the ``super-symmetric informationally complete measurement'' \cite{Zhu:2015AP}. 


\begin{figure}
 \vspace{-1.2cm}
 \centering
 \includegraphics[width=0.795\linewidth]{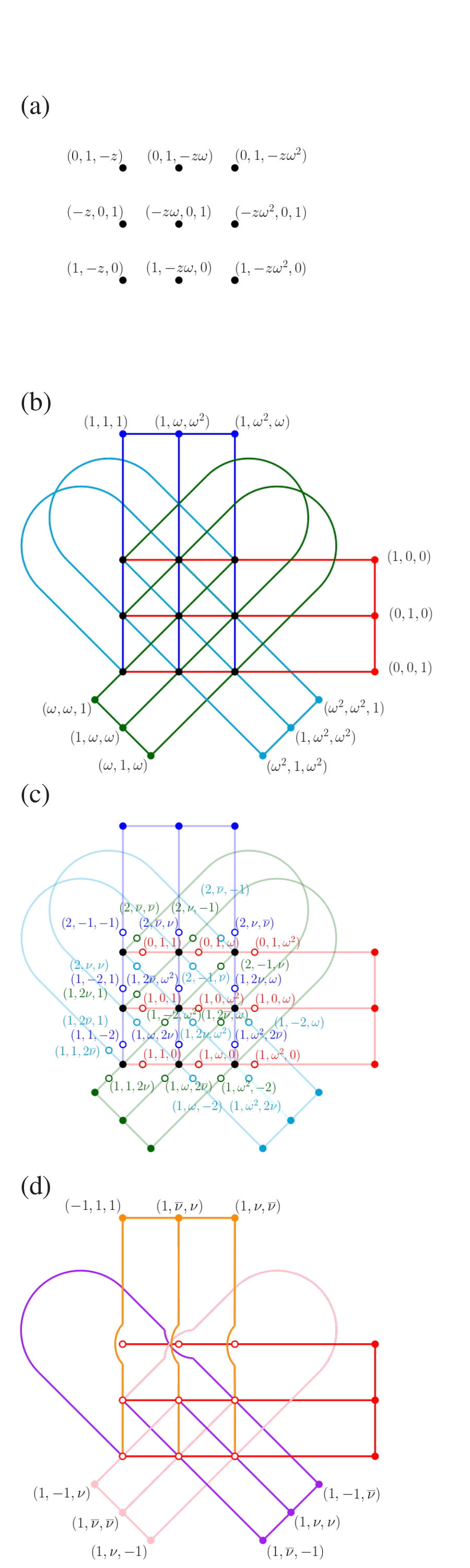}
 \caption{(a)-(d) illustrate steps (I)-(IV), respectively, of the construction of the rigid KS set defined by the super SIC; see the main text for details. In (b) and (d), the dots in a line of the same color as the dot are orthogonal to the latter dot. $z=1$, $\omega = e^{2i\pi/3}$, $\nu = e^{i\pi/3}$, and $\overline{\nu} = e^{-i\pi/3}$.}
 \label{fig:main-figure}
\end{figure}


Let us now show how the SIC in $\mathbb{C}^3$ defines a rigid KS set in $\mathbb{C}^3$. The construction is as follows:

(I) Every SIC in $\mathbb{C}^3$ is unitarily equivalent to a SIC of the form in Fig.~\ref{fig:main-figure} (a), where $\omega$ is a third root of unity and $z$ is an arbitrary phase factor \cite{Szollosi:2014XXX,Hughston:2016AM}. Hereafter, we will take $\omega=e^{\frac{i 2 \pi}{3}}$ and $z = 1$. This corresponds to the so-called {\em Hesse SIC} \cite{Bengtsson:2012PLA,Zhu:2015AP,Fuchs:2017A}, which is rigid.

(II) Wootters \cite{Wootters:2006quantum} pointed out that the nine SIC elements of the Hesse SIC determine four mutually unbiased bases (MUBs). Each MUB element is orthogonal to three elements of the Hesse SIC. Wootters' construction is shown in Fig.~\ref{fig:main-figure} (b).
The resulting $9+12$-element set is called BBC-21 \cite{Xu:2024PRL} and is a SI-C set but not a KS set \cite{Bengtsson:2012PLA}. As it is clear from the way BBC-21 is constructed, BBC-21 is rigid. An independent proof of the rigidity of BBC-21 can be found in \cite{Xu:2024PRL}. 

(III) If we start from BBC-21, every orthogonal pair (SIC element, MUB element) determines a new element: the one that is orthogonal to both of them. Since there are $36$ pairs, each yielding a unique new element, there are $36$ new elements, which are illustrated in Fig.~\ref{fig:main-figure} (c). By construction, the set with the $9+12$ old and the $36$ new elements is rigid.
Now is when we make a crucial observation: The $36$ new elements can be partitioned into four disjoint SICs. In Fig.~\ref{fig:main-figure} (c) we assign a different color to each of the four new SICs. The $9$ red dots (with white inside) define a SIC, and similarly for the green, blue, and cyan dots.

(IV) Each of the four new SICs determine three new MUBs. 
These three MUBs form a complete set of MUBs with one of the Hesse MUBs (a different one for each of the four new SICs).
The construction of the three MUBs associated to the ``red'' SIC is illustrated in Fig.~\ref{fig:main-figure} (d). The constructions for the other three SICs are similar (see Appendix~C).

In total, we obtain a set of $9 + 12 + 36 + 3 \times 12 = 93$ elements, which is rigid by construction. We can remove the $12$ elements that have degree two, as they do not constrain the possible non-contextual assignments of the other elements. The resulting $81$-element set is a KS set as it can be checked with the aid of a simple program \cite{Peres:1993} or an Integer Linear program \cite{Salt:2023}. An analytic proof can be found in Appendix~D. 


\begin{figure}[t]
 \centering \includegraphics[width=0.371\textwidth]{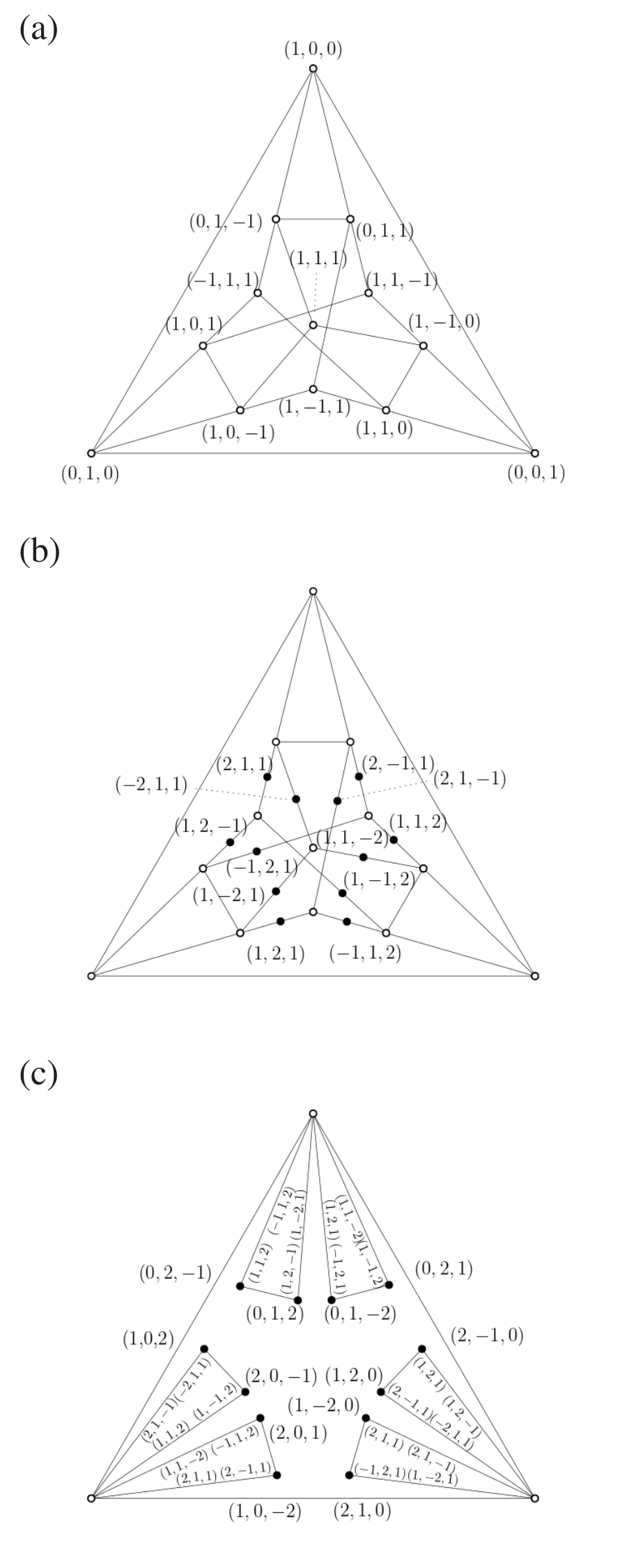}
 \caption{(a)-(c) illustrate steps (i)-(iii), respectively, of the construction of the rigid KS set defined by the minimal SI-C set; see the main text for details. Dots in the same line or in the same triangle represent mutually orthogonal vectors. In (c), the edges connecting the black vertices with a vector of the canonical basis are labeled by the vectors added in (b). For example, $(0,1,2)$ is the unique vector orthogonal to $(1,0,0)$ and $(1,2-1)$, and also the unique vector orthogonal to $(1,0,0)$ and $(1,-2,1)$.}
 \label{fig:yu-oh}
\end{figure}


{\em Rigid KS set defined by the minimal SI-C set---}The minimal SI-C in every Hilbert space is the $13$-element set found by Yu and Oh \cite{Yu:2012PRL}, which is illustrated in Fig.~\ref{fig:yu-oh} (a). As proven in \cite{Yu:2012PRL,bengtsson2012gleason,Xu:2024PRL}
this set is rigid.

Let us now show how the minimal SI-C defines a rigid KS set in $\mathbb{C}^3$. The construction is as follows:

(i) We start with the minimal SI-C set in Fig. \ref{fig:yu-oh} (a).

(ii) For every orthogonal pair that is not in a basis, we add the vector that is orthogonal. This adds $12$ vectors represented by black vertices in Fig. \ref{fig:yu-oh} (b). 

(iii) For each pair consisting of a vector of the canonical basis and a vector added in step (ii), we add the orthogonal vector. This adds $12$ vectors represented in black in Fig.~\ref{fig:yu-oh} (c). 

The resulting set, consisting of the $13$ vectors in (i), plus the $12$ in (ii), plus the $12$ in (iii) is, by construction, rigid. However, the resulting set is not new: it is a set found by Conway and Kochen in the 1990's and communicated to Peres \cite{Peres:1991JPA,Peres:1993} (see Appendix~E), which is known to be a KS set. Hereafter, we will refer to this set as CK-37. 


{\em Critical rigid KS sets---}Zimba and Penrose define a KS set to be {\em critical} if we cannot remove any of its elements without losing the property of being a KS set \cite{Zimba:1993SHPS}. Neither the rigid KS set associated to the super SIC (henceforth KS-81) nor CK-37 are critical. Therefore, two crucial questions are what are the critical KS sets contained in KS-81 and CK-37 and whether these subsets are, themselves, rigid.

The smallest critical subset of $KS-81$ that we have found has $55$ elements (see Appendix~G).
CK-37 has two critical KS sets. Both were identified by Conway and Kochen (see Appendix~E), so we will refer to them as CK-33 and and CK-31, as they have 33 and 31 elements, respectively. CK-33 was previously found by Sch{\"u}tte \cite{Bub:1996FP} and is different than the $33$-element set of Peres \cite{Peres:1991JPA} (which has the same orthogonality graph as the $33$-element set of Penrose \cite{Penrose:2000}). There are three equivalent (up to unitary transformations) versions of CK-33 (depending of which four vectors we remove from CK-37). There are six equivalent (up to unitary transformations) versions of CK-31. One of them was reported by Peres \cite{Peres:1993} and is the KS set in $\mathbb{C}^3$ with the smallest number of elements {\em known}. 

Our construction of the rigid KS set from the minimal SI-C set allows us to prove the following.


{\em Theorem 1.} CK-33 is rigid.

\begin{proof}
 CK-33 can be obtained, e.g., by removing $(0,2,1)$, $(0,1,-2)$, $(0,2,-1)$, and $(0,1,2)$ from CK-37. These four vectors correspond to the upper four black dots in Fig.~\ref{fig:yu-oh} (c). $(0,2,1)$ is the unique vector orthogonal to $(1,0,0)$, $(1,2,-1)$, and $(1,-2,1)$. Therefore, we can remove it without compromising the rigidity that existed in CK-37. A similar argument explains why removing $(0,1,-2)$, $(0,2,-1)$, and $(0,1,2)$ do not compromise rigidity. Fig.~\ref{fig:yu-oh} (c) also makes clear why there are exactly three equivalent versions of CK-33 in CK-37.
\end{proof}


{\em Theorem 2.} CK-31 is rigid.

\begin{proof}
CK-31 can be obtained by, e.g., removing $(2,1,1)$, $(2,1,0)$, $(2,1,-1)$, $(-1,2,1)$, $(1,-2,0)$, and $(1,-2,1)$ from CK-37. These six vectors are {\em all} the vectors in the lower right small triangle in Fig.~\ref{fig:yu-oh} (c).
 
We can prove the rigidity of CK-31 by following a similar procedure as for CK-37. Steps (I) and (II) are, in fact, the same as (i) and (ii).
 
(III) We remove the four vectors $(2,1,1)$, $(2,1,-1)$, $(-1,2,1)$, $(1,-2,0)$, which were each obtained in (II). The resulting set of 21 vectors is rigid.

(IV) For each pair consisting of a vector of the canonical basis and one of the eight vectors not in minimal SI-C set, we add the orthogonal vector. This process adds 10 new vectors (i.e., those corresponding to black dots of Fig.~\ref{fig:yu-oh} (c) that are not in the lower right small triangle). The resulting set is CK-31.

Rigidity follows from the fact that we started with a rigid set and each new vector added was orthogonal to two added in the previous step.
 It is then clear that we can remove the six without compromising the rigidity. Fig.~\ref{fig:yu-oh} (c) also makes clear why there are exactly six equivalent versions of CK-31 in CK-37.
\end{proof}


{\em The minimal KS set problem---}We have been looking for the minimal KS set in $\mathbb{C}^3$ for decades using all kinds of methods \cite{Jost:1976,Peres:1988,Peres:1993,Pavicic:2005JPA,Arends:2011,Uijlen:2016,Kirchweger:2023,Li:2024,williams2024maximalnonkochenspeckersetslower}, but we still do not have the answer. It has only been proven that it has to have more than 23 elements \cite{Kirchweger:2023,Li:2024} and, at most 31 \cite{Peres:1993}. However, the proof that CK-31 is rigid and, specially, that CK-31 is {\em determined} by the minimal SI-C set changes the traditional (brute-force) approach and strongly suggests that the answer to the minimal KS set is $31$.

The argument is as follows. The minimal KS set {\em must} be a SI-C set. It has been proven \cite{Cabello:2017} that the minimal SIC set (in any dimension) is the one in Fig. \ref{fig:yu-oh} (a). So far, we have proven that the minimal KS set {\em known} \cite{Peres:1993} is {\em determined} by the minimal SI-C set in the sense that it follows from, first, completing bases and, then, adding {\em some} new elements that are orthogonal to two existing elements, and, finally, removing (without compromising rigidity) unnecessary elements. That is, it is determined by a rigid SI-C set by adding elements using two operations that preserve rigidity.
Interestingly, we can prove an even stronger result.

{\em Theorem 3.} There is no KS set of 30 (or less) elements that is obtained from the minimal SI-C set by, first, completing bases and, then adding {\em all possible} new elements that are orthogonal to, at least, two previous elements, and, finally, removing some elements.

\begin{proof}
We start with the minimal SI-C set that has 13 elements. After completing all incomplete bases we end up with the 25-element SI-C set in Fig.~\ref{fig:yu-oh} (b). 
Now we add all elements that are orthogonal to, at least, two of the $25$ existing elements. 
There are exactly $72$ new vectors that satisfy this requirement. The problem is finding the smallest KS set within this set of $97$ elements. The solutions requires noticing that any element in a KS set that is not in a complete basis can be removed and the resulting set is still a KS set. Therefore, we can remove all the elements that are not in complete basis. This leads us with a subset of $37$ elements, which is a KS set. The smallest KS set in it has $31$ elements.
\end{proof}

If we assume that the minimal KS set must contain the minimal SI-C set and must be rigid, then Theorem~3 is pointing in the direction that $31$ is the minimum KS set in $\mathbb{C}^3$. It is not a proof because we cannot discard that a smaller KS set would appear when we consider a second round of completing basis. Such a second round leads to a set of $1741$ elements. After removing those elements that are in a unique orthogonal basis, and then removing elements with degree $2$, we obtain a set of $805$ elements. However, finding whether there is a $30$-element KS subset inside this set is out of our computing capabilities. Still, Theorem~3 leads us to formulate the following.

{\em Conjecture 1.} There is no rigid KS set of 30 (or less) elements that contains the minimal SI-C set.

Therefore, under the assumption that the minimum KS set is rigid and contains the minimal complete SI-C set, our conjecture implies that there is no smaller KS set than CK-31. 
The requirement of rigidity is natural in two senses. On the one hand, to convert a non-KS set into a KS set, we need the added vectors to be orthogonal to, at least, three other vectors of the set. Asking that two of the added vectors be orthogonal to two of the minimal complete SI-C set seems a weak requirement. On the other hand, asking a fundamental quantum object such as the minimal KS set to be rigid seems natural.

In principle, there is the possibility that the minimal KS set does {\em not} contain the minimal SI-C set. However, it is very unlikely for two reasons. First, all known small KS sets contain the minimal SI-C set: CK-37, CK-33, CK-31, Peres-33, and Penrose-33. Second, the next {\em known} SI-C set which does not contains the minimal SI-C set is BBC-21, which has 21 elements and, after completion is the $21+36$ element set in Fig.~\ref{fig:main-figure} (b), which has too many elements to be the minimal KS set.
Therefore, Conjecture~1 (supported by Theorem~3) strongly suggests that CK-31 is the minimal KS set in $\mathbb{C}^3$ allowed by quantum theory.


{\em Conclusions and open problems---}The last years have completely changed our perspective on why KS sets are important. We have proven that they have to be in {\em every} bipartite perfect quantum strategy, in {\em every} bipartite fully nonlocal quantum correlation, in {\em every} bipartite quantum correlation that ``touch'' the nonsignaling bound (specifically, a face of the nonsignaling polytope which do not have local points). Moreover, several fundamental recent results on quantum computation and quantum foundations rely on these correlations and, therefore, rely, ultimately, on KS sets. In addition, several recent applications demand rigid KS sets.

Here, we have solved two problems (but one of them only partially). On the one hand, we have solved the ``rigid KS set problem'' by identifying five rigid KS sets in $\mathbb{C}^3$: two of them come from the super SIC, three of them come from the minimal SI-C and were known (although their authors, Conway and Kochen-- never published them and never used them because they were less symmetrical than other alternatives, see Appendix~F). On the other hand, in the process of solving the rigid KS set problem, we have found a strong connection between this problem and the main open problem in the field, namely, the ``minimum KS problem.'' Thanks to this connection, we have been able to prove that there is no KS set with $30$ elements containing the minimal complete SI-C set and elements that are orthogonal to two elements of the minimal SI-C set. This result strongly suggests that {\em the} minimal KS set in quantum theory has 31 observables. This result is not only crucial in foundations of quantum theory but, in light of the recently found key roles that KS sets play in quantum information and computation (see the Introduction), it is important in a broad sense. 

SI-C sets of observables are important in contextuality and foundations of physics. However, we know little about them. We know that, in $\mathbb{C}^3$, the minimal SI-C set and BBC-21 are the first two members of a family \cite{Xu:2015PLA}. We know the minimal SI-C sets in every $\mathbb{C}^d$ \cite{Cabello:2018PRA}, and we know how to check whether a set is a SI-C \cite{CKB2015}. However, we do not have systematic ways to obtain SI-C sets. In that respect, our results pave the way for the search for new SI-C sets, since they provide two different ways to, potentially, produce new examples. On the one hand, the construction associated to the super SIC, could be extended to some other special dimensions. Further research is needed in this direction. On the other hand, comparing the two constructions allows us to identify a pattern: Step 1: take a (possibly symmetric) POVM. Step 2: calculate all vectors orthogonal to two of them.
Step 3: calculate all vectors orthogonal to (at least) two of the vectors from the two previous steps. Step 4: check for SI-C. However, this will probably produce SI-C sets with more than 21 elements and, as a consequence, KS sets with more than 30 elements. 


{\em Acknowledgments---}We thank Ingemar Bengtsson, Emmanuel Briand, Chris Fuchs, and Zhengyu Li for helpful discussions, references, and comments.
This work was supported by the EU-funded project \href{10.3030/101070558}{FoQaCiA}, the \href{10.13039/501100011033}{MCINN/AEI} (Project No.\ PID2020-113738GB-I00), and the Wallenberg Center for Quantum Technology (WACQT).


\newpage
\mbox{}
\newpage
\onecolumngrid
\appendix


\section{Appendix A: The 117-observable KS set is not rigid}

\setcounter{section}{\value{section}}


The orthogonality graph of the 117-observable KS set of Ref.~\cite{Kochen:1967JMM} is shown in Fig.~\ref{Fig:KS-117}. The reason why this orthogonality graph corresponds to a KS set in $\mathbb{C}^3$ is explained in the caption of Fig.~\ref{Fig:KS-117}.


\begin{figure}[t!]
\centering
\includegraphics[width=0.55\linewidth]{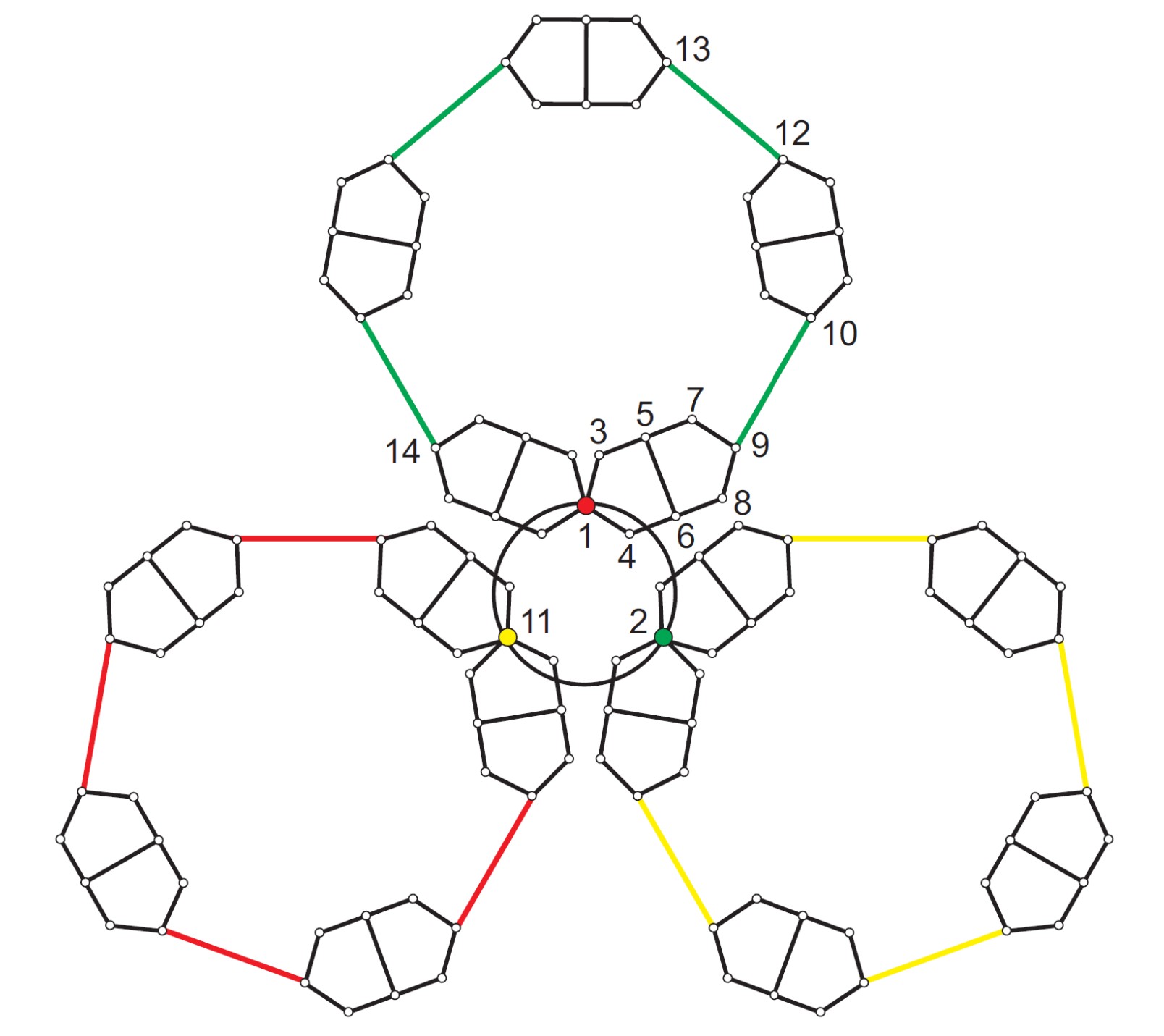}
\caption{Orthogonality graph of the 117 rank-one projectors in ${\cal H}=\mathbb{C}^3$ of the KS set $\mathcal{V}$ in Ref.~\cite{Kochen:1967JMM}. Nodes in the same straight line or circumference represent mutually orthogonal projectors. The red node is orthogonal to all nodes connected by a red edge. Similarly for the green and yellow nodes. That $\mathcal{V}$ does not admit a KS assignment $f: \mathcal{V} \rightarrow \{0,1\}$ satisfying $f(u) + f(v) \leq 1$ for $u, v \in \mathcal{V}$ orthogonal, and $\sum_{u \in b} f(u) = 1$ for every orthonormal basis $b \in \mathcal{V}$ can be seen as follows. One of the nodes $1$, $2$, and $11$ has to be assigned value $1$. Without loss of generality, the symmetry of the graph allows us to assume that it is node $1$. That is, we assume that $f(1)=1$. Then, $f(9)=0$ because of the subset $\{1,3,4,5,6,7,8,9\}$.
Then, since nodes $2$, $9$, and $10$ are mutually orthogonal and node $2$ is connected to node $1$, then $f(10)=1$. Applying the same argument, $f(12)=0$ and $f(13)=1$, since $\{2, 12, 13\}$ form an orthogonal basis. Repeating it again twice, $f(14)=1$. However, nodes $1$ and $14$ cannot be both assigned value $1$. This proves that $\mathcal{V}$ is a KS set. The figure is taken from \cite{Budroni:2022RMP}.}
\label{Fig:KS-117}
\end{figure}


The proof that the set is not rigid is as follows. Notice that Fig.~\ref{Fig:KS-117} contains $15$ copies of a $10$-node structure (see nodes $1$ to $10$ in Fig.~\ref{Fig:KS-117}). Without loss of generality, we can assume that nodes $1$ and $2$ correspond to the vectors
\begin{align}
 1&=(1,0,0),\\
 2&=(0,0,1).
\end{align}
Then, we can chose the vectors corresponding to the other eight nodes as follows:
\begin{align}
 3&=(0,\cos \alpha, \sin \alpha),\\
 4&=(0,\cos \beta, \sin \beta),\\
 5&=(\tan \phi \csc \alpha,-\sin \alpha, \cos \alpha),\\ 
 6&=(\tan \phi \csc \beta,-\sin \beta, \cos \beta),\\ 
 7&=(\cot \phi, 1, -\cot \alpha),\\
 8&=(\cot\phi, 1, -\cot \beta),\\ 
 9&=(\sin \phi,-\cos \phi,0),\\ 
 10&=(\cos \phi,\sin \phi,0),
\end{align}
with $\alpha \neq \beta$ and $\beta \neq \frac{p \pi}{2}$, with $p$ integer. Since nodes $5$ and $6$ are orthogonal, then,
\begin{equation} \label{eq231}
 \sin \alpha \sin \beta \cos (\alpha-\beta)=-\tan^2 \phi.
\end{equation}
Since the left-hand side of Eq.~\eqref{eq231} is in $[-\frac{1}{8}, 1]$, then
\begin{equation}
 |\phi| \le \arctan\frac{1}{\sqrt{8}}.
\end{equation}
Therefore, there is plenty of room to chose $\phi$ (and then $\alpha$ and $\beta$) for most of the 10-node structures in Fig.~\ref{Fig:KS-117}. Consequently, the 117-observable KS set of Ref.~\cite{Kochen:1967JMM} is not rigid.


\section{Appendix B: None of the known methods to construct KS sets produce rigid KS sets in $\mathbb{C}^3$}


The methods to construct KS sets are, essentially, of two types. One type groups those methods that produce a KS set in $\mathbb{C}^D$
starting from a KS set in $\mathbb{C}^d$, with $d < D$ \cite{Peres:1993,Zimba:1993SHPS,Cabello:1996JPA,Cabello:2005PLA,MatsunoJPA2007}. These methods cannot produce KS sets in $\mathbb{C}^3$, since KS sets are impossible in $\mathbb{C}^2$ \cite{Kochen:1967JMM}.

The other type groups those methods that concatenate basic structures such as the $10$-node structure made by nodes $1$ to $10$ in Fig.~\ref{Fig:KS-117} to produce a KS set \cite{Kochen:1967JMM,Cabello:1995JPA,Cabello:2018PRA,Ramanathan:2020Q,Cabello:2021PRL}. There is an infinite number of these structures in any $\mathbb{C}^d$, with $d \ge 3$ \cite{Cabello:1995JPA,Cabello:2018PRA,Ramanathan:2020Q}. However, none of the minimal ones are rigid \cite{Cabello:2018PRA}. Moreover, as these structures become more complex, they also become less rigid \cite{Cabello:1995JPA}. Consequently, every KS constructed by {\em concatenating} these structures will not be rigid.

\newpage


\begin{figure}[h]
\centering
\includegraphics[width=0.8\linewidth]{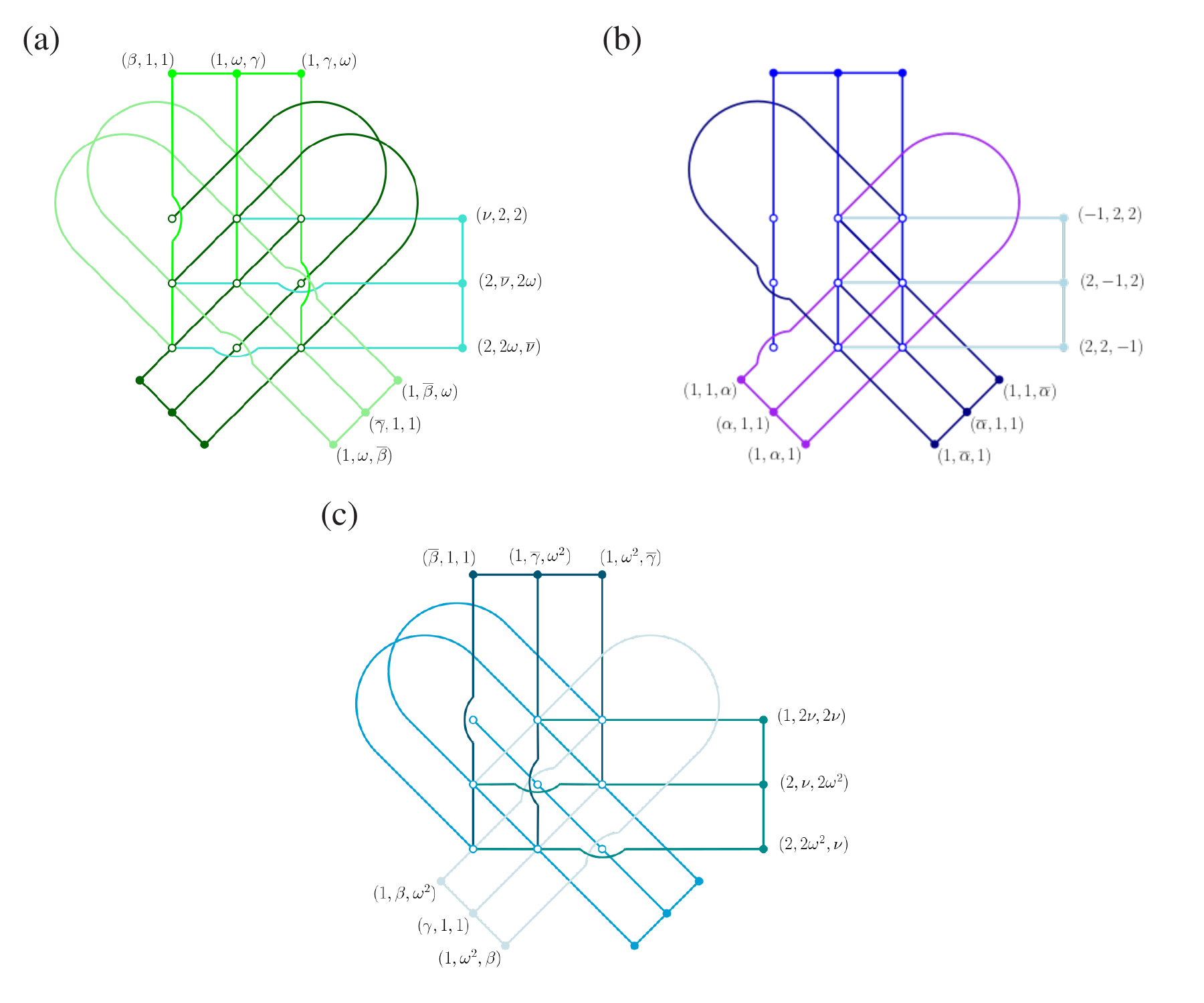}
\caption{(a) Construction of three MUBs associated to the ``green'' SIC. (b) Construction of the three MUBs associated to the ``blue'' SIC. (c) Construction of the three MUBs associated to the ``cyan'' SIC. $\omega = \frac{1}{2}(-1+i\sqrt{3})$, $\nu = \frac{1}{2}(1+i\sqrt{3})$, $\alpha = \frac{1}{2}(-1+i3\sqrt{3})$, $\beta = -2+i\sqrt{3}$, $\gamma=\frac{1}{2}(5+i\sqrt{3})$, and $\overline{x}$ denotes the conjugate of $x$.}
\label{fig:the-mub-construction-business}
\end{figure}


\section{Appendix C: Details on the construction of the rigid KS set associated to the super SIC}


In Fig.~1 in the main text, we described how KS-81 (i.e., the rigid KS set associated to the super SIC) is constructed. There, in Fig.\ 1 (c), we described how the three new MUBs associated to the ``red'' SIC are constructed. Here, we do the same for the other three SICs. Specifically, Fig.~\ref{fig:the-mub-construction-business} (a) shows the construction of the three MUBs associated to the ``green'' SIC. Similarly, Fig.~\ref{fig:the-mub-construction-business} (b) shows the construction of the three MUBs associated to the ``blue'' SIC, and Fig.~\ref{fig:the-mub-construction-business} (c) shows the construction of the three MUBs associated to the ``cyan'' SIC.

\newpage


\begin{figure*}
\centering
\includegraphics[width=0.66\linewidth]{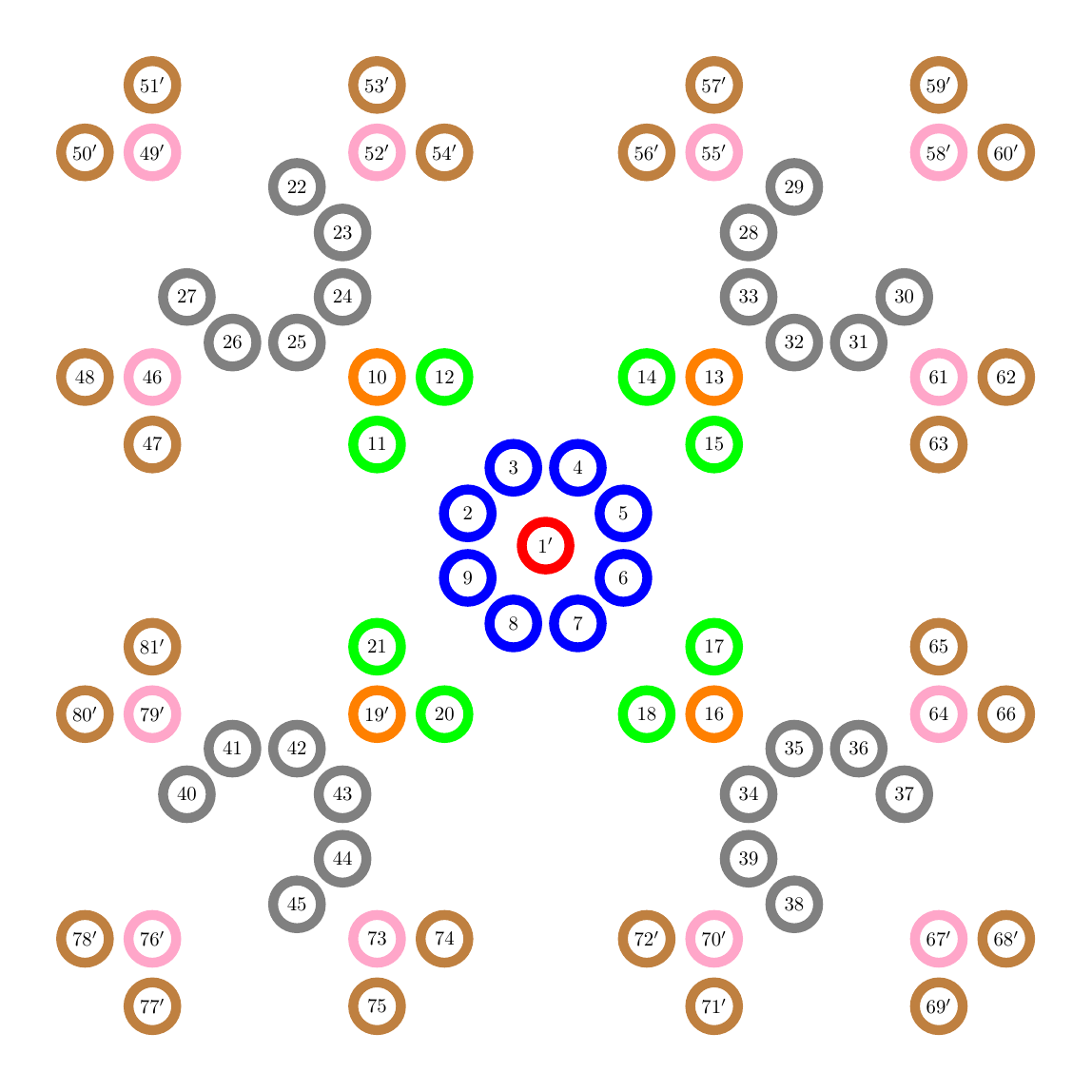}
\caption{KS-81 (i.e., the rigid 81-element KS set defined by the super SIC). Each element is represented by a circle labeled with a number. The elements are given explicitly in Table~\ref{tab:CT93}. Each of the seven colors corresponds to one of the seven orbits of the vertices of the orthogonality graph of the set under its automorphisms. Those elements containing an apostrophe do not belong to the $55$-element critical KS subset detailed in Appendix~G.}
 \label{fig:CT93}
\end{figure*}


\section{Appendix D: Detailed description of KS-81, the rigid KS set associated to the super SIC}


The $81$ vectors of KS-81 (i.e., the rigid KS set associated to the super SIC) are listed in Table~\ref{tab:CT93} and illustrated in Fig.~\ref{fig:CT93}. In Fig.~\ref{fig:CT93}, the central ``octagon'' of blue circles surrounding a red
circle represents the nine elements of a SIC. In addition, there are four sets of six gray circles. Each of them represents six elements of a SIC. The four ``triangles'' of circles surrounding each SIC represent a set of MUBs.
Together, the central SIC (that we will call the \emph{Hesse SIC}) and its associated MUBs (the \emph{Hesse MUBs}) form BBC-21 (the Hesse configuration). KS-81 contains the entire Hesse SIC, but only six of the nine elements of each of the other four SICs (hereafter called {\em non-Hesse SICs}). 

Each of the $24$ gray circles is in a different orthogonal basis together with the blue circle that is in the same position in the blue octagon, and one circle from the MUB between the gray set and the blue octagon. 

There are additional orthogonalities between the Hesse-SIC and each of the 12 non-Hesse MUBs, and between the non-Hesse SICs and their corresponding MUBs. No other orthogonalities exist. 

The orthogonality graph of KS-81 has an automorphism group $A$ of order $48$ and is isomorphic to $GL_2(\mathbb{F}_3)$ (i.e., the group of $2 \times 2$ invertible matrices with entries from the field of three elements under multiplication). The seven different colors in Fig.~ \ref{fig:CT93} correspond to the seven orbits of the vertices of the orthogonality graph under its automorphisms. 


\begin{table}[h]
 \centering
 \begin{tabular}{lccc|lccc|lccc|lccc|lccc|lccc}
 \hline \hline
 No. & $v_1$ & $v_2$ & $v_3$ & \;No. & $v_1$ & $v_2$ & $v_3$ & \;No. & $v_1$ & $v_2$ & $v_3$ & \;No. & $v_1$ & $v_2$ & $v_3$ & \;No. & $v_1$ & $v_2$ & $v_3$ & \;No. & $v_1$ & $v_2$ & $v_3$ \\ \hline
$1'$ & $0$ & $1 $ & $-1$ & \;$15$ & $\omega^2$ & $1$ & $\omega^2$ & \;$29$ & $1$ & $1$ & $2\overline{\nu}$ & \;$43$ & $1$ & $\omega^2$ & $0$ & \;$57'$ & $2$ & $2\omega^2$ & $\nu$ & \;$71'$ & $1$ & $\overline{\beta}$ & $\omega$ \\
$2$ & $1$ & $0$ & $-1$ & \;$16$ & $1$ & $\omega$ & $\omega$ & \;$30$ & $1$ & $-2$ & $\omega$ & \;$44$ & $1$ & $0$ & $\omega$ & \;$58'$ & $\gamma$ & $1$ & $1$ & \;$72'$ & $1$ & $\omega$ & $\overline{\beta}$ \\
$3$ & $1$ & $-1$ & $0$ & \;$17$ & $\omega$ & $\omega$ & $1$ & \;$31$ & $1$ & $\omega$ & $-2$ & \;$45$ & $1$ & $\omega$ & $0$ & \;$59'$ & $1$ & $\omega^2$ & $\beta$ & \;$73$ & $-1$ & $1$ & $1$ \\
$4$ & $\omega$ & $0$ & $-1$ & \;$18$ & $\omega$ & $1$ & $\omega$ & \;$32$ & $2$ & $\overline{\nu}$ & $-1$ & \;$46$ & $-1$ & $2$ & $2$ & \;$60'$ & $1$ & $\beta$ & $\omega^2$ & \;$74$ & $1$ & $\nu$ & $\overline{\nu}$ \\
$5$ & $1$ & $-\omega^2$ & $0$ & \;$19'$ & $1$ & $0$ & $0$ & \;$33$ & $2$ & $-1$ & $\overline{\nu}$ & \;$47$ & $2$ & $2$ & $-1$ & \;$61$ & $\overline{\beta}$ & $1$ & $1$ & \;$75$ & $1$ & $\overline{\nu}$ & $\nu$ \\
$6$ & $\omega^2$ & $0$ & $-1$ & \;$20$ & $0$ & $1$ & $0$ & \;$34$ & $1$ & $2\nu$ & $1$ & \;$48$ & $2$ & $-1$ & $2$ & \;$62$ & $1$ & $\overline{\gamma}$ & $\omega^2$ & \;$76'$ & $1$ & $\nu$ & $\nu$ \\
$7$ & $1$ & $-\omega$ & $0$ & \;$21$ & $0$ & $0$ & $1$ & \;$35$ & $1$ & $1$ & $2\nu$ & \;$49'$ & $\alpha$ & $1$ & $1$ & \;$63$ & $1$ & $\omega^2$ & $\overline{\gamma}$ & \;$77'$ & $1$ & $\overline{\nu}$ & $-1$ \\
$8$ & $0$ & $1$ & $-\omega^2$ & \;$22$ & $1$ & $2\overline{\nu}$ & $\omega^2$ & \;$36$ & $1$ & $-2$ & $\omega^2$ & \;$50'$ & $1$ & $\alpha$ & $1$ & \;$64$ & $\nu$ & $2$ & $2$ & \;$78'$ & $1$ & $-1$ & $\overline{\nu}$ \\
$9$ & $0$ & $1$ & $-\omega$ & \;$23$ & $1$ & $\omega^2$ & $2\overline{\nu}$ & \;$37$ & $1$ & $\omega^2$ & $-2$ & \;$51'$ & $1$ & $1$ & $\alpha$ & \;$65$ & $2$ & $2\omega$ & $\overline{\nu}$ & \;$79'$ & $1$ & $\overline{\nu}$ & $\overline{\nu}$ \\
$10$ & $1$ & $1$ & $1$ & \;$24$ & $1$ & $2\nu$ & $\omega$ & \;$38$ & $2$ & $-1$ & $\nu$ & \;$52'$ & $\overline{\alpha}$ & $1$ & $1$ & \;$66$ & $2$ & $\overline{\nu}$ & $2\omega$ & \;$80'$ & $1$ & $\nu$ & $-1$ \\
$11$ & $1$ & $\omega$ & $\omega^2$ & \;$25$ & $1$ & $\omega$ & $2\nu$ & \;$39$ & $2$ & $\nu$ & $-1$ & \;$53'$ & $1$ & $1$ & $\overline{\alpha}$ & \;$67'$ & $\beta$ & $1$ & $1$ & \;$81'$ & $1$ & $-1$ & $\nu$ \\
$12$ & $1$ & $\omega^2$ & $\omega$ & \;$26$ & $2$ & $\nu$ & $\overline{\nu}$ & \;$40$ & $1$ & $0$ & $1$ & \;$54'$ & $1$ & $\overline{\alpha}$ & $1$ & \;$68'$ & $1$ & $\omega$ & $\gamma$ & \\
$13$ & $1$ & $\omega^2$ & $\omega^2$ & \;$27$ & $2$ & $\overline{\nu}$ & $\nu$ & \;$41$ & $1$ & $1$ & $0$ & \;$55'$ & $1$ & $2\nu$ & $2\nu$ & \;$69'$ & $1$ & $\gamma$ & $\omega$\; & \\
$14$ & $\omega^2$ & $\omega^2$ & $1$ & \;$28$ & $1$ & $2\overline{\nu}$ & $1$ & \;$42$ & $1$ & $0$ & $\omega^2$ & \;$56'$ & $2$ & $\nu$ & $2\omega^2$ & \;$70'$ & $\overline{\gamma}$ & $1$ & $1$ \\
 \hline \hline
 \end{tabular}
\caption{KS-81, the rigid KS set defined by the super SIC. Each element is represented by a vector $(v_1,v_2,v_3)$. $\omega = \frac{1}{2}(-1+i\sqrt{3})$, $\nu = \frac{1}{2}(1+i\sqrt{3})$, $\alpha = \frac{1}{2}(-1+i3\sqrt{3})$, $\beta = -2+i\sqrt{3}$, $\gamma=\frac{1}{2}(5+i\sqrt{3})$, and $\overline{x}$ denotes the conjugate of $x$. Numbering matches that of Fig.~\ref{fig:CT93}. The elements marked with an apostrophe do not appear in the 55-element critical KS subset detailed in Appendix~G.}
 \label{tab:CT93}
\end{table}


If one ignores the non-basis orthogonalities, then (i) there are four orbits: the Hesse SIC, the Hesse MUBs, the non-Hesse SICs, the non-Hesse MUBs, and (ii) there is a KS assignment (i.e., one needs to take the non-basis orthogonalities into account to see that KS-81 is, indeed, a KS set).


\section{Appendix E: Proof that KS-81 is a KS set}


Here, we analytically prove that KS-81 (i.e., the rigid KS set associated to the super SIC) is, in fact, a KS set. The proof will use Fig.~\ref{fig:CT93} and is as follows. Any KS assignment of $0$'s and $1$'s to the elements of KS-81 must assign $1$ to, at least, one element of each of the $16$ MUBs. Each of the elements of the $12$ non-Hesse MUBs (the pink and brown circles in Fig.~\ref{fig:CT93}) is in a single orthogonal basis. The green MUB elements are each in four orthogonal bases, and the orange MUB elements are each in one orthogonal basis.
Each of the $24$ elements of non-Hesse SICs (the grey circles) are in one orthogonal basis, and the blue circles are in three orthogonal bases, while the red circle is in zero. We call the four MUBs surrounding a non-Hesse SIC its corresponding MUBs.

\begin{lemma} \label{lem:outer-limits}
Choose any non-Hesse SIC and its corresponding MUBs, and consider the resulting set of elements. All KS assignments of this set can have at most two $1$'s assigned to the non-Hesse SIC elements.
\end{lemma}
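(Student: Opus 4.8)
The plan is to analyze the combinatorial structure of a single non-Hesse SIC together with its four corresponding MUBs, treating it as a self-contained orthogonality (sub)graph and bounding the number of $1$'s that a KS assignment can place on the six SIC vertices. First I would set up notation: label the six non-Hesse SIC elements and recall from the description of Fig.~\ref{fig:CT93} that each of them lies in exactly one orthogonal basis, and that within this chosen sub-configuration each such basis consists of (SIC element, one MUB element from a corresponding MUB, and one further element — a blue Hesse-SIC circle — which we may ignore or set aside since we only need an upper bound coming from the MUB-completeness constraint). The key structural input is that the four corresponding MUBs are a complete set of four MUBs in $\mathbb{C}^3$ (three new ones plus one Hesse MUB, by step (IV)), so each MUB is an orthonormal basis and must receive exactly one $1$ under any KS assignment.

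The main step is a counting/pigeonhole argument. Each of the six non-Hesse SIC elements, when it is assigned $1$, forces a $0$ on the MUB element orthogonal to it (the MUB element sharing its basis), and more importantly consumes the ``$1$-budget'' of the bases it participates in. I would show that if three (or more) of the six SIC elements were assigned $1$, then the orthogonality relations between those SIC elements and the elements of the four corresponding MUBs (as read off from Fig.~\ref{fig:CT93}, or from the explicit vectors in Table~\ref{tab:CT93}) force two distinct elements of some single MUB to both be assigned $1$, or force all elements of some MUB to be assigned $0$ — either way contradicting that each MUB is a basis. Concretely, the six SIC vertices together with their incident basis-edges into the four MUBs form a bipartite incidence pattern; I would verify (a finite check) that any three SIC vertices have their orthogonality neighborhoods distributed among the MUBs in such a way that a consistent $0/1$ assignment respecting all four basis constraints is impossible. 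Equivalently, one can argue that each assigned $1$ on a SIC element ``kills'' enough MUB elements that three such $1$'s over-constrain at least one of the four MUBs.

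An alternative, possibly cleaner route: use the already-established rigidity of BBC-21 and of the construction to reduce the claim to the single canonical ``red'' SIC with its three explicitly-listed MUBs plus the relevant Hesse MUB, write down the $6 \times (4\cdot 3)$ orthogonality matrix from Table~\ref{tab:CT93}, and then the statement becomes a finite linear-algebra/combinatorics verification that the $0/1$ polytope cut out by the four ``exactly one $1$ per MUB'' equations and the ``$f(u)+f(v)\le 1$'' inequalities has no vertex with three $1$'s on the SIC coordinates. Symmetry under the automorphism group (order $48$, acting on the orbits) would cut the case analysis down substantially, since all four non-Hesse SICs are equivalent and the six SIC elements within one are permuted transitively enough to reduce ``which three are chosen'' to a couple of cases.

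The hard part will be organizing the finite case check cleanly rather than the conceptual content: one must make sure the orthogonality data used (which MUB element is orthogonal to which SIC element, and which triples of MUB elements form bases) is exactly the data encoded in Fig.~\ref{fig:CT93} and Table~\ref{tab:CT93}, and then rule out \emph{every} way of placing three $1$'s on the SIC. I expect the cleanest presentation is: assume $f$ assigns $1$ to three SIC elements; observe each forces a specific $0$ in a specific corresponding MUB; count how many of the (at most $3$) elements of each of the four MUBs are thereby forced to $0$; show some MUB has all three elements forced to $0$ (or gets two $1$'s via a separate orthogonality), contradicting that it is a basis. Establishing that ``some MUB is fully zeroed'' is the crux and is where the explicit incidence structure — equivalently, the fact that these are genuinely \emph{complete} sets of MUBs so the SIC-to-MUB orthogonalities are maximally spread out — does the work.
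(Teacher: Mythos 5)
Your proposal is correct and is, in substance, the same finite incidence-check that the paper performs; the only real difference is that you argue the contrapositive. The paper fixes the MUB assignment first (pick one ``marker'' per MUB) and reads off from Fig.~\ref{fig:cornerMUBs+outerSICs} that at most two gray curves (SIC elements) avoid all chosen markers, so at most two SIC elements can be $1$. You instead assume three SIC elements are assigned $1$ and show the induced $0$'s over-constrain some MUB basis. These are logically equivalent, and both hinge on the same combinatorial fact about the SIC--MUB incidence (the Hesse/$AG(2,3)$ structure of a SIC with its four MUBs): each of the six remaining SIC elements is orthogonal to exactly one element of each of the four MUBs, the three removed SIC elements lie on a single line whose MUB element (the missing orange circle) is therefore never blocked, and for any three of the six remaining points at least one of the other three parallel classes is hit in all three of its lines. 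You correctly identify that the crux is this finite verification and that the completeness of the four MUBs (each an orthonormal basis forced to carry exactly one $1$) is what supplies the contradiction.

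One minor inaccuracy: you write that three SIC $1$'s might ``force two distinct elements of some single MUB to both be assigned $1$.'' That disjunct never occurs — assigning $1$ to a SIC element only \emph{excludes} orthogonal elements (forces $0$'s); it cannot force a $1$. Only the other disjunct (some MUB fully zeroed, i.e., all three of its elements orthogonal to at least one of the three chosen SIC elements) actually delivers the contradiction. Since you phrased it as ``either way,'' the logic still closes, but the first alternative is vacuous and should be dropped. Otherwise the argument is sound and matches the paper's.
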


\begin{proof}
The orthogonalities between any of the non-Hesse SICs and their four corresponding MUBs are described in Fig.~\ref{fig:cornerMUBs+outerSICs}. 
Any KS assignment must assign $1$ to exactly one element from each of the four MUBs. The MUB represented by the circles has two of its three elements pictured in Fig.~\ref{fig:cornerMUBs+outerSICs} and the other MUBs have all three elements pictured. 
From Fig.~\ref{fig:cornerMUBs+outerSICs}, one may see that any choice of three distinct markings (e.g., top left circle, top cross, and central square in Fig.~\ref{fig:cornerMUBs+outerSICs})
leaves at most two gray curves that do not touch any of the chosen markers (in our example, the right-most vertical curve and bottom horizontal curve in Fig.~\ref{fig:cornerMUBs+outerSICs}).
\end{proof}


\begin{figure}
\centering
\includegraphics[width=0.40\linewidth]{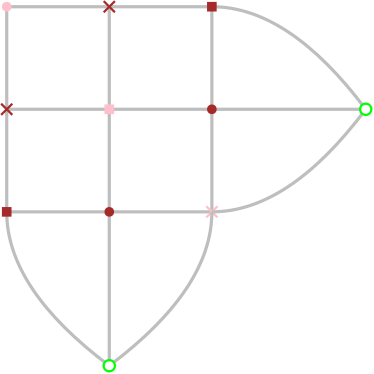}
\caption{Orthogonalities between any of the non-Hesse SIC elements and the corresponding four MUBs. Each SIC element is represented by a gray curve, and each MUB element is represented by one of four markings (circle ---in green---, and disk, square, and cross ---in two colors---) corresponding to its basis. The colors correspond to those of Fig.~\ref{fig:CT93}. A SIC element and a MUB element are orthogonal if, and only if, the marking of the MUB element is on the curve representing the SIC element. The missing (orange) circle corresponds to an element of the Hesse MUB that is not orthogonal to any of the non-Hesse SIC elements.}
\label{fig:cornerMUBs+outerSICs}
\end{figure}


\begin{lemma} \label{lem:2blue}
Consider the Hesse SIC and its MUBs. Any KS assignment with two $1$'s assigned to Hesse SIC elements must assign a $1$ to either one orange Hesse MUB element, or at most one blue Hesse SIC element.
\end{lemma}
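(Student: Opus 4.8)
\emph{Proof plan.} I would deduce the statement from the structural description of KS-81 in Appendix~D, specifically from the fact that the Hesse SIC together with its four MUBs is BBC-21, whose orthogonality graph is the point--line incidence graph of the Hesse configuration $AG(2,3)$: the nine Hesse SIC elements are the points, the twelve Hesse MUB elements are the lines, the four MUBs are the four parallel classes (each a triple of pairwise parallel lines partitioning the nine points), and orthogonality is incidence. The same description pins down the red central Hesse SIC element $r$ as the unique point lying on no orthogonal basis of KS-81, and the four orange Hesse MUB elements as precisely the four lines through $r$ (one in each parallel class); the remaining eight (green) lines avoid $r$. Since every MUB is an orthogonal basis of KS-81, a KS assignment $f$ marks exactly one line with $1$ in each parallel class, and any point lying on a marked line is marked $0$.

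First I would dispose of the trivial alternative: if one of the two marked Hesse SIC elements is $r$, then at most one blue element is marked and the second alternative of the statement holds. So I may assume both marked elements are blue points, say $p_1,p_2$; put $\ell=\overline{p_1p_2}$ and let $X$ be the parallel class of $\ell$. Next I would observe that outside $X$ the marked line is forced: for any parallel class $W\neq X$, the line of $W$ through $p_1$ and the line of $W$ through $p_2$ are distinct (a common line would be $\ell$, which lies in $X$), so $W$ consists of a line through $p_1$, a line through $p_2$, and a unique third line meeting neither; since $f(p_1)=f(p_2)=1$, the first two are marked $0$, hence the third line is the one marked $1$ in $W$.

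It then remains to exhibit a parallel class $W\neq X$ whose line meeting neither $p_1$ nor $p_2$ passes through $r$: its marked line is then an orange Hesse MUB element carrying a $1$, which is the first alternative. Here I would split on whether $r\in\ell$. The lines through $r$ that meet $\{p_1,p_2\}$ are exactly $\overline{rp_1}$ and $\overline{rp_2}$, and these coincide iff $r\in\ell$; hence at least two --- and at least three when $r\in\ell$ --- of the four lines through $r$ avoid both $p_1$ and $p_2$. If $r\notin\ell$, then the line of $X$ through $r$ avoids $\{p_1,p_2\}$ (were it $\overline{rp_1}$ or $\overline{rp_2}$ it would meet $\ell$ at $p_1$ or $p_2$ and so equal $\ell$, forcing $r\in\ell$); so at least one of the $\ge 2$ suitable lines through $r$ lies in a class $W\neq X$. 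If $r\in\ell$, the line of $X$ through $r$ is $\ell$ itself, which meets $\{p_1,p_2\}$, so all of the $\ge 3$ suitable lines through $r$ already lie in classes other than $X$. Either way a suitable $W$ exists, and together with the previous paragraph this completes the argument.

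The main obstacle --- and the reason the case split $r\in\ell$ versus $r\notin\ell$ is unavoidable --- is that $X$ is exactly the single parallel class in which the marked line is \emph{not} determined by $p_1$ and $p_2$ alone; one must check that an orange marked line always appears in one of the three remaining parallel classes. Everything else (the incidence bookkeeping in $AG(2,3)$ and the identification of the orange elements with the lines through $r$) is routine once the description from Appendix~D is in place.
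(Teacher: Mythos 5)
Your proof is correct, but it is not the route the paper takes. The paper's (terse) argument runs through two ingredients: (i) the global bound that a KS assignment of BBC-21 can set at most two Hesse SIC elements to $1$, and (ii) the observation that the red point $r$ is adjacent only to orange lines, so if no orange line is selected one may consistently ``upgrade'' $r$ to $1$ without breaking any constraint; combined with (i), two blue $1$'s plus this upgrade would give three SIC $1$'s, a contradiction. Your argument instead goes local: fixing two blue points $p_1,p_2$, you note that in every parallel class $W$ other than the class $X$ of $\overline{p_1p_2}$ the selected line is \emph{forced} to be the unique third line $m_W$ missing both $p_1$ and $p_2$, and you then establish by the case split $r\in\ell$ versus $r\notin\ell$ that at least one such $m_W$ passes through $r$ and is therefore orange. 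This is more self-contained: it never invokes the ``at most two'' bound (which the paper asserts without proof inside this lemma) and it directly produces the orange line that Theorem~4 needs, rather than inferring its existence by contradiction from a counting bound. The paper's argument buys brevity and reuses a structural fact about BBC-21; yours buys explicitness and independence from that fact, at the cost of the $AG(2,3)$ bookkeeping you carry out. Both are sound, and your identification of the orange elements as the four lines through $r$ (and $r$ as the unique zero-basis Hesse SIC point) matches Appendix~D.

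Two small remarks. In the case $r\notin\ell$, it is slightly cleaner to say directly that exactly one of the two lines through $r$ avoiding $\{p_1,p_2\}$ lies in $X$ (since the four lines through $r$ are one per class), so the other lies in some $W\neq X$; your ``at least one'' phrasing is correct but weaker than what you actually have. Also, in applying the forcing step, it is worth stating explicitly that the one-per-class basis constraint is what makes $m_W$ not merely \emph{permitted} but \emph{required} to carry the $1$ — you do say this (``hence the third line is the one marked $1$''), which is the point that closes the argument.
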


\begin{proof}
In any KS assignment of BBC-21 (i.e., the Hesse SIC and Hesse MUBs), at most two of the Hesse SIC vertices can be assigned with $1$.
Neither of the green MUB elements are adjacent to the red Hesse SIC element (which is only adjacent to orange Hesse MUB elements). Therefore, in choosing all green Hesse MUB elements one can always choose the red Hesse SIC element to be a $1$, so that only one blue Hesse SIC element can be assigned $1$. 
\end{proof}

{\em Theorem 4.} The KS-81 is a KS set.

\begin{proof}
KS-81 has $40$ orthogonal bases.
Any KS assignment of $0$'s and $1$'s has to assign a $1$ to one element in each of the $12$ non-Hesse MUBs. Still, there are $28$ more orthogonal bases. The non-Hesse SIC elements can be used to assign a $1$ to, at most, $2 \times 4= 8$ of them.
This leaves, at least, $20$ orthogonal bases to be assigned.
In both cases of Lemma \ref{lem:2blue}, we find that, at least, one orthogonal basis remains unassigned: in the first case, the Hesse MUB elements can assign, at most, $1 + 3 \times 4 = 13$ (one orange, three green) orthogonal bases, and the Hesse SIC elements can assign, at most, $2 \times 3= 6$ (two blue) orthogonal bases, leaving, at least, one orthogonal basis unassigned. In the second case, the Hesse MUB elements can assign, at most, $4 \times 4 = 16$ (four green) orthogonal bases, and the Hesse SIC elements (one blue, one red) can assign, at most, $3$ orthogonal bases. Again, we find that, at least, one orthogonal basis remains unassigned.
\end{proof}


\section{Appendix F: The history of Peres-33, CK-37, CK-33, and CK-31, as told by Peres, and the reason why Conway and Kochen did not use CK-31 in their free-will theorem}


The history of Peres-33, CK-37, CK-33, and CK-31, as told by Asher Peres \cite{Cabello:1996}[pp. 54--55], can be summarized as follows. Around 1990, Peres learned that Simon Kochen and John Conway had found a KS set with 33 elements, CK-33. Roger Penrose gave Peres the idea to visualize CK-33 as follows: ``[The] construction starts from a unit lattice of points in $3$ dimensions. Draw a sphere of radius $2.5$, and keep only the points inside that sphere. Connect them to the center of the sphere. This gives $37$ rays (\ldots). Then remove $4$ ``equatorial'' points'' \cite{Cabello:1996}.
The $33$ remaining points form CK-33. Then, Peres had the idea that ``if, in the cubic lattice, a coordinate $2$ was replaced by $\sqrt{2}$, there would again be numerous orthogonality relations'' \cite{Cabello:1996}. This gives Peres-33. 
However, when Peres wrote Kochen to tell him about Peres-33, Kochen answered that meanwhile Conway and he had found a set with only $31$ elements, CK-31.

Curiously, in their free-will theorem \cite{CK06,CK09,conway_kochen_2011}, Conway and Kochen use Peres-33 rather than CK-31. They explain this choice because Peres-33 is ``more symmetric'' and ``allows a simpler proof'' \cite{CK06}.

\section{Appendix G: Smallest critical KS set inside KS-81}
\label{appG}


The smallest critical KS set inside KS-81 that we have found is the $55$-element set obtained by removing the following $26$ vectors: $v_{1}$, $v_{19}$, $v_{49}$, $v_{50}$, $v_{51}$, $v_{52}$, $v_{53}$, $v_{54}$, $v_{55}$, $v_{56}$, $v_{57}$, $v_{58}$, $v_{59}$, $v_{60}$, $v_{67}$, $v_{68}$, $v_{69}$, $v_{70}$, $v_{71}$, $v_{72}$, $v_{76}$, $v_{77}$, $v_{78}$, $v_{79}$, $v_{80}$, and $v_{81}$. Interestingly, this $55$-element KS set does not contain either the minimum SI-C set or BBC-21. We leave open the question of finding the smallest SI-C set.

The proof that this $55$-element critical KS set is not rigid is as follows: We observe that $19$ of its elements are contained in BBC-21. We compute the general form of any orthogonal representation of the orthogonality graph $G_{19}$ of the $19$-element subset of BBC-21. In particular, we prove that there are two (two-parameter) families which satisfy this (up to unitary transformations). Next, we show that each of these subsets extend to form orthogonal representations of $G_{55}$, the orthogonality graph of the critical $55$-element KS set. In other words, we compute all possible KS sets (up to unitary transformations) whose orthogonalities are the same as the $55$-element KS set. Finally, we provide two orthogonal representations (KS sets) of $G_{55}$ for which there is no unitary mapping one to the other (up to phase). We therefore conclude that the set is not rigid.

\begin{definition}
Given a graph $G = (V,E)$, an assignment $\alpha : V \to \mathbb{C}^d$ is called a \emph{faithful orthogonal representation} of $G$ if vectors $\alpha(u)$ and $\alpha(v)$ are orthogonal if, and only if, $u$ and $v$ are adjacent.
\end{definition}

For brevity, we refer to a faithful orthogonal representation simply as an \emph{orthogonal representation}.

In this appendix and in the following, instead of dealing with KS sets, we often deal with objects that are more restrictive: sequences of vectors in $\mathbb{C}^d$. 

\begin{definition}
 We say that two sequences of vectors $u_1, \ldots, u_k$ and $v_1, \ldots, v_k$ in $\mathbb{C}^d$ are \emph{phase-unitary equivalent} if there exists a unitary $T : \mathbb{C}^d \to \mathbb{C}^d$ and angles $0 \leq \theta_1, \ldots, \theta_k < 2\pi$ satisfying
 \begin{equation}
 v_j = e^{i\theta_j} T(u_j),
 \end{equation}
 for each $j=1, \ldots, k$.
\end{definition}

Clearly, if two KS sets $K$ and $K'$ are equivalent up to some unitary $T$, then there is some ordering of the elements of $K$ and those of $K'$ such that their normalized versions are phase-unitary equivalent. 

\begin{lemma}
 Let $u_1, \ldots, u_{d-1}$ and $v_1, \ldots, v_{d-1}$ be sequences of phase-unitary equivalent vectors in $\mathbb{C}^d$ that are both linearly independent. Let $u_d, v_d \in \mathbb{C}^d$ satisfying $||u_d|| = ||v_d||$, and $<u_j, u_d> = 0$, $<v_j, v_d> = 0$ for each $j=1, \ldots, d-1$. Then, the sequences $u_1, \ldots, u_{d}$ and $v_1, \ldots, v_d$ are also phase-unitary equivalent.
\end{lemma}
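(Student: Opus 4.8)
The plan is to show that the \emph{same} unitary $T$ that already witnesses the phase-unitary equivalence of $u_1,\dots,u_{d-1}$ and $v_1,\dots,v_{d-1}$ automatically carries $u_d$ to a unimodular multiple of $v_d$. The only genuinely new input is that a unitary maps the orthogonal complement of a subspace to the orthogonal complement of its image, so I would set things up around the $1$-dimensional complements of the spans of the first $d-1$ vectors.

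Concretely, first fix a unitary $T:\mathbb{C}^d\to\mathbb{C}^d$ and angles $\theta_1,\dots,\theta_{d-1}$ with $v_j = e^{i\theta_j} T(u_j)$ for $j=1,\dots,d-1$, and put $W_u=\operatorname{span}\{u_1,\dots,u_{d-1}\}$, $W_v=\operatorname{span}\{v_1,\dots,v_{d-1}\}$. Linear independence of the $u_j$ gives $\dim W_u = d-1$, hence $\dim W_u^{\perp}=1$, and the hypothesis $\langle u_j,u_d\rangle=0$ for all $j$ says exactly $u_d\in W_u^{\perp}$; likewise $v_d\in W_v^{\perp}$. Since scaling a vector by a nonzero constant does not change the line it spans, $\{e^{i\theta_j}T(u_j)\}=\{v_j\}$ spans $T(W_u)$, so $T(W_u)=W_v$, and because $T$ is unitary it follows that $T(W_u^{\perp})=W_v^{\perp}$. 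Thus $T(u_d)$ lies in the line $W_v^{\perp}$, which also contains $v_d$.

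Then I would dispose of the degenerate case $\|v_d\|=0$ separately: here $v_d=0$ and $\|u_d\|=\|v_d\|=0$ forces $u_d=0$, so $v_d=e^{i\theta_d}T(u_d)$ holds for any $\theta_d$. Otherwise $v_d\ne 0$ spans $W_v^{\perp}$, so $T(u_d)=\lambda v_d$ for some $\lambda\in\mathbb{C}$; taking norms and using that $T$ is an isometry together with $\|u_d\|=\|v_d\|$ gives $|\lambda|\,\|v_d\| = \|T(u_d)\| = \|u_d\| = \|v_d\|$, hence $|\lambda|=1$, and writing $\lambda^{-1}=e^{i\theta_d}$ yields $v_d=e^{i\theta_d}T(u_d)$. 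Combined with the given relations for $j<d$, the single unitary $T$ and the angles $\theta_1,\dots,\theta_d$ exhibit $u_1,\dots,u_d$ and $v_1,\dots,v_d$ as phase-unitary equivalent. I do not expect a real obstacle: the whole content is the complement-matching observation plus a norm count, and the only point requiring care is the trivial degenerate case $u_d=v_d=0$ and the fact that linear independence of the first $d-1$ vectors is precisely what makes $W_u^{\perp}$ one-dimensional, so that ``orthogonal to all $u_j$'' pins $u_d$ down up to a scalar. This lemma is intended to serve as the inductive step that propagates a computed orthogonal representation from a small subgraph to all of $G_{55}$, one completed basis at a time.
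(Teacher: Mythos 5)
Your argument is correct and follows essentially the same route as the paper's proof: both exploit that $T$ preserves inner products and norms, so $T(u_d)$ lands in the one-dimensional orthogonal complement of $\operatorname{span}\{v_1,\dots,v_{d-1}\}$ with the right norm, forcing $v_d = e^{i\theta_d}T(u_d)$. Your treatment is slightly more careful in spelling out the subspace reasoning and explicitly handling the degenerate case $u_d=v_d=0$, which the paper leaves implicit; otherwise the two proofs coincide.
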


\begin{proof}
 Since $T$ is a unitary it preserves inner products and norms. Therefore the vector $T(u_d)$ satisfies each of the conditions of $v_d$, and so, in particular, $||T(u_d)|| = ||v_d||$, and $T(u_d)$ and $v_d$ lie in the same one-dimensional vector space. Therefore, it follows that $v_d = e^{i\theta_d}v_d$ for some $0 \leq \theta_d < 2\pi$.
\end{proof}

The method we use to compute the orthogonal representations of $G_{19}$ relies on \emph{$r$-neighbor bootstrap percolation} \cite{Chalupa:1979, Balogh:1998, Baxter:2010, Balogh:2007, Balogh:2012, Morris:2017, Reichman:2012, Wesolek:2019, Gunderson:2020}. For a vertex $v$ of a graph $G$, the set $N(v)$, called the \emph{neighborhood} of $v$, is the set of vertices of $G$ that are adjacent to $v$.

Let $G = (V,E)$ be a finite graph, let $A_0 \subset V$, and let $r$ be a positive integer. For $i \geq 1$, define $A_i = \{v \in V \setminus A_{i-1} : |N(v) \cap A_{i-1}| \geq r\}$. Since $G$ is finite, this sequence stabilizes (there is some $k$ such that $A_k = A_{k+\ell}$ for any non-negative integer $\ell$). 

We view this as a process (called $r$-neighbor bootstrap percolation), beginning with the set $A_0$, proceeding in rounds from $j=1$ to $k$, at each round generating $A_j$ from $A_{j-1}$. During any round $j$ we call the vertices of $A_j$ \emph{infected}. In each round new vertices become infected when they are adjacent to, at least, $r$ infected vertices (this is the process by which $A_{j+1}$ is obtained from $A_j$). If the process ends with each vertex of the graph infected (i.e., $A_k = V$), then we say that $A_0$ \emph{$r$-percolates} $G$. 

\begin{proposition} \label{prop:percolate}
 Let $K$ be a KS set in ${\cal H}=\mathbb{C}^3$, and let $G = (V,E)$ be the orthogonality graph of $K$. Let $A_0 \subseteq V$ $2$-percolate $G$. Then, the set $S$ of elements of $K$ corresponding to the vertices of $A_0$ fully define the KS set $K$.
\end{proposition}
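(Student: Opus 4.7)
The plan is to proceed by induction on the rounds of the $2$-neighbor bootstrap percolation process, using the preceding lemma to propagate phase-unitary equivalence from $A_0$ outward to the entire graph. The key geometric observation is that in $\mathbb{C}^3$, a unit vector orthogonal to two linearly independent vectors is determined up to a global phase, since the orthogonal complement of a $2$-dimensional subspace is $1$-dimensional. This is precisely the geometric fact that the combinatorial rule of $2$-percolation reflects.

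More concretely, let $\{u_v\}_{v \in V}$ denote the normalized vectors of $K$, and let $\{w_v\}_{v \in V}$ be any other sequence of unit vectors in $\mathbb{C}^3$ providing a faithful orthogonal representation of $G$ whose restriction to $A_0$ is phase-unitary equivalent to $\{u_v\}_{v \in A_0}$ via some unitary $T$ and phases $\{\theta_v\}_{v \in A_0}$. I would prove by induction on $i \geq 0$ that the restrictions to $A_i$ are phase-unitary equivalent via the same $T$. The base case $i=0$ is the hypothesis.

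For the inductive step, any vertex $v \in A_{i+1} \setminus A_i$ has, by definition of $2$-percolation, at least two neighbors $v_1, v_2 \in A_i$. In $K$, the vectors $u_{v_1}$ and $u_{v_2}$ are both orthogonal to $u_v$ and hence lie in the $2$-dimensional subspace $u_v^\perp$; since they correspond to distinct elements of the KS set, i.e., distinct rays, they are linearly independent. By the inductive hypothesis, $w_{v_j} = e^{i\theta_{v_j}} T(u_{v_j})$ for $j=1,2$, so $w_{v_1}$ and $w_{v_2}$ are also linearly independent (as $T$ is unitary). The preceding lemma, applied with $d=3$ to the triples $(u_{v_1}, u_{v_2}, u_v)$ and $(w_{v_1}, w_{v_2}, w_v)$, yields a phase $\theta_v$ such that $w_v = e^{i\theta_v} T(u_v)$. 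Different vertices infected in round $i+1$ can be processed independently, since each is handled using only its two neighbors in $A_i$. Because $A_0$ $2$-percolates $G$, we have $A_k = V$ for some finite $k$, and the equivalence extends to all of $V$.

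The argument is essentially mechanical; the only conceptual content is the correspondence between $2$-percolation and the $\mathbb{C}^3$-geometric fact above, with the preceding lemma doing the propagation. No substantial obstacle arises beyond the bookkeeping needed to verify that linear independence is preserved at each step, which follows from the faithfulness of the representation (distinct rays in $K$) combined with the inductive hypothesis (unitarity of $T$).
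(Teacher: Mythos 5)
Your proof is correct and takes essentially the same approach as the paper's: induction over the percolation rounds, with each newly infected vertex pinned down (up to a phase) because it is orthogonal to two linearly independent, previously determined vectors in $\mathbb{C}^3$. The only cosmetic difference is that you phrase the conclusion as phase-unitary equivalence propagated via the preceding lemma, whereas the paper states it as each vector being determined up to a complex constant; the content is the same.
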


\begin{proof}
 Let $v \in V$. We show that the element corresponding to $v$ is unique up to multiplication by a complex constant. Since $A_0$ percolates, $v \in A_j$ for some $1 \leq j \leq k$. We proceed by induction on $j$. If $j=0$, $v \in S$ and so is uniquely defined. If $j \geq 1$, then $v$ is orthogonal to two vectors $w_1, w_2 \in A_{j-1}$ that are uniquely defined by $S$. Furthermore, $w_1$ and $w_2$ are linearly independent, and so $v$ lies in the one-dimensional subspace defined by $\langle v,w_1 \rangle = 0$, $\langle v,w_2\rangle = 0$. Therefore, $v$ is defined up to a complex constant.
\end{proof}

Notice that the previous proposition yields a method to explicitly compute an orthogonal representation given an assignment of vectors to the vertices of $A_0$.
For each vertex $v \in V$ that is infected in round $j$, we form a matrix $M_v$ whose rows are the conjugates of the vectors corresponding to the infected neighbors of $v$. Since $v$ already has at least two infected neighbors whose corresponding vectors must be linearly independent we see that $\rank(M_v) \in \{2,3\}$. If $\rank(M_v) = 3$, then the assignment of vectors to the vertices of $A_0$ does \emph{not} extend to a KS set. Otherwise, the vector corresponding to $v$ is any generator of its one dimensional null space of $M_v$ (and is thus defined up to a complex constant). 

By choosing symbolic vectors (i.e., vectors whose components have $\mathbb{C}^d$-valued variables) for the $2$-percolating set we can, in principle, compute all possible orthogonal representations (up to unitary transformations) by percolating and solving equations that ensure that each $M_v$ has rank $2$. This process may require several ``rounds'' in which we eliminate variables using the obtained equations, redefine the initial vectors, and apply the percolation process again.

The 19-element subset is:
\begin{equation}
    \begin{array}{llllll}
        v_1 = (-1,0,1), & v_2 = (1,-1,0), & v_3 = (-\omega, 0, 1), & v_4 = (1, -\omega^2, 0), & v_5 = (-\omega^2,0,1), & v_6 = (1, -\omega, 0), \\
        v_7  = (0, 1, -\omega^2), & v_8 = (0,1,-\omega), & v_9 = (1,1,1), & v_{10} = (1, \omega, \omega^2), & v_{11} = (1, \omega^2, \omega), & v_{12} = (1,\omega^2,\omega^2), \\
        v_{13} = (\omega^2, \omega^2, 1), & v_{14} = (\omega^2,1,\omega^2), & v_{15} = (1, \omega, \omega), & v_{16} = (\omega, \omega, 1), & v_{17} = (\omega, 1, \omega), & v_{18} = (0,1,0), \\ v_{19} = (0,0,1).
    \end{array}
\end{equation}
Denote the vertices of its orthogonality graph $G_{19}$ by $z_1,\ldots,z_{19}$ so that $z_i$ and $z_j$ are adjacent if, and only if, the vectors $v_i$ and $v_j$ are orthogonal. 
Vertices $z_1$, $z_2$, $z_3$, $z_9$, $z_{10}$, $z_{11}$ percolate $G_{19}$, with vertices $z_9,z_{10},z_{11}$ forming a clique of size $3$. As we are interested in orthogonal representations up to the application of a unitary (and up to phase), we may set $w_9 = (1,0,0), w_{10} = (0,1,0), w_{11} = (0,0,1)$. Moreover, $z_1$ and $z_2$ are both adjacent to $z_9$, and $z_3$ is adjacent to $z_{10}$ so we assign them $w_1 = (0,1,a), w_2 = (0,1,b)$, and $w_3 = (1,0,c)$, respectively, for $a,b,c \in \mathbb{C}$.
When we percolate with these vectors, we find that everything is defined with the exception of the vectors corresponding to $z_4,z_{12},z_{15}$. Table~\ref{tab:19ray-partialpercolation} illustrates this (partial) percolation process. 

We obtain equations associated to each of these vertices $v \in \{z_{4}, z_{12}, z_{15}\}$ in terms of $a$, $b$, and $c$, since the rank of $M_v$ must be $2$, and so the determinant of every $3 \times 3$ sub-matrix of $M_v$ must be $0$. To simplify matters, we only display the numerator of the determinant for each of the equations, and label each of the equations by the orthogonalities that generate it.


\begin{table}
	\centering
	\begin{tabular}{cccc}
            \hline \hline
		Round & Vertex & Infected neighbors & Vector $w_j$  \\
		\hline
		\multirow{6}{*}{0} & 9 & ~ & $\left(1,\,0,\,0\right)$ \\
		& $10$ & ~ & $\left(0,\,1,\,0\right)$ \\
		& $11$ & ~ & $\left(0,\,0,\,1\right)$ \\
		& $1$  & ~ & $\left(0,\,1,\,a\right)$ \\
		& $2$  & ~ & $\left(0,\,1,\,b\right)$ \\
		& $3$  & ~ & $\left(1,\,0,\,c\right)$ \\
		\hline
		\multirow{2}{*}{1} 
		& $16$ & $2,3$  & $\left(1,\,\frac{b^*}{c^*},\,-\frac{1}{c^*}\right)$ \\
		& $18$ & $1,3$ & $\left(1,\,\frac{a^*}{c^*},\,-\frac{1}{c^*}\right)$ \\
		\hline
		\multirow{4}{*}{2} 
		& $5$  & $11,18$ & $\left(1,\,-\frac{c}{a},\,0\right)$ \\
		& $7$  & $11,16$ & $\left(1,\,-\frac{c}{b},\,0\right)$ \\
		& $17$ & $1,16$  & $\left(1,\,-\frac{c a^*}{b a^* + 1},\,\frac{c}{b a^* + 1}\right)$ \\
		& $19$ & $2,18$  & $\left(1,\,-\frac{c b^*}{a b^* + 1},\,\frac{c}{a b^* + 1}\right)$ \\
		\hline
		\multirow{6}{*}{3} 
		& $4$  & $11,17,19$ & ~ \\
		& $6$  & $10,19$    & $\left(1,\,0,\,-\frac{b a^* + 1}{c^*}\right)$ \\
		& $8$  & $10,17$    & $\left(1,\,0,\,-\frac{a b^* + 1}{c^*}\right)$ \\
		& $13$ & $2,5$      & $\left(1,\,\frac{a^*}{c^*},\,-\frac{a^*}{b^* c^*}\right)$ \\
		& $14$ & $1,7$      & $\left(1,\,\frac{b^*}{c^*},\,-\frac{b^*}{a^* c^*}\right)$ \\
		& $15$ & $5,16,17$  & ~ \\
		\hline
		\multirow{3}{*}{4} 
		& $4$  & $11,17,19$ & ~ \\
		& $12$ & $3,13,14$  & ~ \\
		& $15$ & $5,6,16,17$ & ~ \\
		\hline \hline
	\end{tabular}
	\caption{2-neighbor bootstrap percolation process on the orthogonality graph of the $19$-element set with $A_0 = \{1,2,3,9,10,11\}$ (indicated by Round 0). The vertices $z_4,z_{15}$ appear twice in the table (in Rounds 3 and 4). Since the corresponding vectors $w_4,w_{15}$ remain unassigned, we do not add them to the infected list. In Round 4, the vertex $z_{15}$ has an additional infected neighbor ($z_6$, which was infected in Round 3), and therefore, we obtain additional equations.}
	\label{tab:19ray-partialpercolation}
\end{table}


The vertex $z_{12}$ is adjacent to previously infected vertices $z_3,z_{13},z_{14}$. From these, we obtain
\begin{align}
    (a-b)(|c|^2-1) & = 0 \ & (z_{12} \sim z_{3},z_{13},z_{14}),
\end{align}
where $z_i \sim z_j$ denotes that $z_i$ and $z_j$ are adjacent in the graph. Since $a \neq b$, we find that $|c| = 1$. Therefore, we may set $c = e^{i\gamma}$.

We also obtain equations from $z_{4}$ and $z_{15}$ (which we have simplified with $|c|=1$): 
\begin{align}
    |a|^2b - a|b|^2 + a - b &= 0 & (z_4 \sim z_{11}, z_{17}, z_{19}), \label{eq:all-solve}\\
    |a|^2|b|^2 + a^*b + ab^* &= 0 & (z_{15} \sim z_5,z_6,z_{16}), \label{eq:norm-solve} \\
    |a|^2 - a^*b - ab^* - 2 &= 0 & (z_{15} \sim z_5,z_{16}, z_{17}), \label{eq:conj-solve}
\end{align}
and from Eqs.~\eqref{eq:norm-solve} and \eqref{eq:conj-solve}, we find that
\begin{equation} \label{eq:norms}
    |a|^2(|b|^2+1) = 2.
\end{equation}
Rearranging Eq.~\eqref{eq:all-solve}, we have $a(1-|b|^2) = b(1-|a|^2)$, from which it follows that either (i) $|b|^2 = 1$ or (ii) $b/a = |b|^2 + 1$ [by solving for $|a|^2$ in Eq.~\eqref{eq:norms}]. By plugging $b = a(|b|^2+1)$ into Eq.~\eqref{eq:all-solve} we may show that in case (ii) it also follows that $|b|^2 = 1$. By Eq.~\eqref{eq:norms}, we have that $|a|^2 = 1$, and so $|a| = |b|=1$.

Therefore, $a = e^{i\alpha}, b = e^{i\beta}$, and so
the real part of Eq.~\eqref{eq:norm-solve} may be written as
\begin{equation}
    \cos(\alpha-\beta) = -\frac{1}{2}.
\end{equation}
We therefore find that $\alpha -\beta = \pm 2\pi/3 + 2\pi n$ must be satisfied for some $n \in \mathbb{Z}$, and so, in particular, that
\begin{equation}
    b \in \{\omega a, \omega^2 a\},
\end{equation}
where $\omega = e^{2i\pi/3}$.
Both of these choices generate orthogonal representations of $G_{19}$, which are both illustrated in Table \ref{tab:19ray-generalform}. Moreover, we have checked that both of these orthogonal representations do extend to orthogonal representations of $G_{55}$ (by percolating the $19$ elements).


\begin{table}[]
    \centering
    \begin{tabular}{cll}
    \hline \hline
        Index $i$ & Vector $w_i$ & Vector $w'_i$ \\
        \hline 
        $1$ & $\left(0,\,1,\,e^{i\alpha}\right)$ & $\left(0,1,e^{i\alpha}\right)$ \\ 
        $2$ & $\left(0,\,1,\,e^{i\left(\alpha + 2\pi/3\right)}\right)$ & $\left(0,1,e^{i(\alpha-2\pi/3)}\right)$ \\ 
        $3$ & $\left(1,\,0,\,e^{i\gamma}\right)$ & $\left(1,0,e^{i\gamma}\right)$ \\ 
        $4$ & $\left(1,e^{-i\left( \, \alpha -\gamma + \pi/3\right)},\,0\right)$ & $\left(1,e^{-i(\alpha-\gamma - \pi/3)},0\right)$\\ 
        $5$ & $\left(1,\,-e^{-i\left( \, \alpha- \gamma\right)},\,0\right)$ & $\left(1,-e^{-i(\alpha-\gamma)},0\right)$\\
        $6$ & $\left(1,\,0, e^{i(\gamma-2\pi/3)}\right)$ & $\left(1,0,e^{i(\gamma + 2\pi/3)}\right)$ \\ 
        $7$ & $\left(1,e^{-i(\alpha-\gamma-\pi/3)},\,0\right)$ & $\left(1,e^{-i(\alpha-\gamma+\pi/3)},0\right)$\\ 
        $8$ & $\left(1,\,0, e^{i(\gamma+2\pi/3)}\right)$ & $\left(1,0,e^{i(\gamma -2\pi/3)}\right)$ \\ 
        $9$ & $\left(1,\,0,\,0\right)$ & $(1,0,0)$\\ 
        $10$ & $\left(0,\,1,\,0\right)$ & $(0,1,0)$ \\ 
        $11$ & $\left(0,\,0,\,1\right)$ & $(0,0,1)$ \\ 
        $12$ & $\left(1,e^{-i(\alpha-\gamma-2\pi/3)},\,-e^{i\gamma}\right)$ & $\left(1,e^{-i(\alpha-\gamma+2\pi/3)},e^{-i\gamma}\right)$ \\ 
        $13$ & $\left(1,\,e^{-i(\alpha-\gamma)}, e^{i(\gamma-\pi/3)}\right)$ & $\left(1, e^{-i(\alpha-\gamma)}, e^{i(\gamma+\pi/3)}\right)$ \\ 
        $14$ & $\left(1,e^{-i(\alpha-\gamma+2\pi/3)}, e^{i(\gamma+\pi/3)}\right)$ & $\left(1,e^{-i(\alpha-\gamma-2\pi/3)}, e^{i(\gamma-\pi/3)}\right)$\\ 
        $15$ & $\left(1,\,e^{-i(\alpha-\gamma)},e^{i(\gamma+\pi/3)}\right)$ & $\left(1, e^{-i(\alpha-\gamma)}, e^{i(\gamma-\pi/3)}\right)$ \\ 
        $16$ & $\left(1, e^{-i(\alpha-\gamma+2\pi/3)},\,-e^{i\gamma}\right)$ & $\left(1,e^{-i(\alpha-\gamma-2\pi/3)}, e^{-i\gamma}\right)$ \\ 
        $17$ & $\left(1,\,e^{-i(\alpha-\gamma-2\pi/3)},e^{i(\gamma-\pi/3)}\right)$ & $\left(1, e^{-i(\alpha-\gamma+2\pi/3)}, e^{i(\gamma+\pi/3)}\right)$ \\ 
        $18$ & $\left(1,\,e^{-i(\alpha-\gamma)},\,-e^{i\gamma}\right)$ & $\left(1,e^{-i(\alpha-\gamma)}, e^{-i\gamma}\right)$\\ 
        $19$ & $\left(1,e^{-i(\alpha-\gamma-2\pi/3)},\,e^{i(\gamma+\pi/3)}\right)$ & $\left(1,e^{-i(\alpha-\gamma+2\pi/3)}, e^{i(\gamma-\pi/3)}\right)$ \\ 
    \hline \hline
    \end{tabular}
    \caption{The set of elements $w_i$ obtained by percolating with $b=\omega a$, and the set of elements $w'_i$ obtained by percolating with $b=-\omega a$. The $w'_i$ may be obtained from the $w_i$ by applying $\pi/3 \to -\pi/3$ (and thus $2\pi/3 \to -2\pi/3$). }
    \label{tab:19ray-generalform}
\end{table}


Now let us consider the sets of $55$ elements $\mathcal{N}$ and $\mathcal{N}'$ obtained by setting $\alpha=0, \beta=2\pi/3, \gamma = 0$, and $\alpha=0, \beta=-2\pi/3,\gamma = 0$, respectively. We prove that there is no unitary mapping $\mathcal{N}$ to $\mathcal{N}'$ (up to phase). 

Let us choose some ordering of the $55$ elements so that the first $19$ are in the same order as for the $19$-element set (the order of the remaining elements is not essential). Call $z_1, \ldots, z_{55}$ the vertices of $G_{55}$ in the ordering just defined (so that the graph induced on the first $19$ vertices is $G_{19}$ and the vertices appear in the same order as in Table~\ref{tab:19ray-generalform}). Let us also apply this ordering to obtain sequences of unit vectors $(u_i), (u'_i)$ from $\mathcal{N}$ and $\mathcal{N}'$, respectively (so that $u_1,\ldots,u_{19}$ are obtained by plugging in $\alpha=0,\beta=2\pi/3,\gamma=0$ into the middle column of Table~\ref{tab:19ray-generalform}, and then normalizing. Similar for the $u'_i$ by plugging in $\alpha=0$,$\beta=-2\pi/3$, and $\gamma=0$ into the last column).

The automorphism group of $G_{55}$ has size two. The only non-trivial automorphism $\sigma$ acts on the percolating set by interchanging the pairs $(z_1,z_2)$, $(z_3,z_4)$, $(z_{10},z_{11})$, and fixing $z_9$. 
Since unitaries preserve inner products, they preserve the orthogonality graph. Therefore, if there is a unitary $T : \mathbb{C}^3 \to \mathbb{C}^3$ mapping $\mathcal{N}$ to $\mathcal{N}'$, then it must be that either (a) $e^{i\theta_j}T(u_j) = u'_j$ for each $j=1,\ldots,55$, or (b) $e^{i\theta_j}T(u_j) = u'_{\sigma(j)}$ for each $j=1,\ldots,55$. We prove now that both of these cases are impossible.

In Case (a), we have that $e^{i\theta_9}T(u_9) = u'_9$, $e^{i\theta_{10}}T(u_{10}) = u'_{10}$, and $e^{i\theta_{11}}T(u_{11}) = u'_{11}$ (i.e., the unitary maps the standard basis to itself). Therefore, the matrix representing the unitary $T$ is of the form
\begin{equation}
    \begin{pmatrix}
        e^{i\phi_1} & 0 & 0 \\
        0 & e^{i\phi_2} & 0 \\
        0 & 0 & e^{i\phi_3}
    \end{pmatrix}
\end{equation}
for some $0 \leq \phi_1,\phi_2,\phi_3 < 2\pi$.

In order to simplify notation, in the following we write $x \equiv y$ if $x = y + 2\pi n$ for some $n \in \mathbb{Z}$.
From $e^{i\theta_1}T(u_1) = u'_1$, we obtain that 
\begin{align}
    \theta_1 + \phi_2 &\equiv 0, \label{eq:u1.1} \\
    \theta_1 + \phi_3 &\equiv 0, \label{eq:u1.2}
\end{align}
and from $e^{i\theta_2}T(u_2) = u'_2$, we obtain that
\begin{align}
    \theta_2 + \phi_2 &\equiv 0, \label{eq:u2.1} \\
    \theta_2 + \phi_3 + 2\pi/3 &\equiv - 2\pi/3. \label{eq:u2.2}
\end{align}
Therefore, $-\theta_1 + \theta_2 \equiv 0$ [from Eqs.~\eqref{eq:u1.1} and \eqref{eq:u2.1}] and $-\theta_1 + \theta_2 \equiv -4\pi/3$ [from Eqs.~\eqref{eq:u1.2} and \eqref{eq:u2.2}]. However, this implies that $-4\pi/3 \equiv 0$, which is a contradiction.

In Case (b), the matrix representing the unitary must be of the form
\begin{equation}
    \begin{pmatrix}
        e^{i\phi_1} & 0 & 0 \\
        0 & 0 & e^{i\phi_3} \\
        0 & e^{i\phi_2} & 0
    \end{pmatrix}
\end{equation}
for some angles $0 \leq \phi_1,\phi_2,\phi_3 < 2\pi.$
From $e^{i\theta_1}T(u_1) = u'_2$, we obtain
\begin{align}
    \theta_1 + \phi_3 &\equiv 0 \label{eq:u1.1-auto}, \\
    \theta_1 + \phi_2 &\equiv -2\pi/3 \label{eq:u1.2-auto},
\end{align}
and, from $e^{i\theta_1}T(u_2) = u'_1$, we obtain
\begin{align}
    \theta_2 + \phi_3 + 2\pi/3 &\equiv 0 \label{eq:u2.1-auto},\\
    \theta_2 + \phi_2 &\equiv 0. \label{eq:u2.2-auto}
\end{align}
Therefore, $-\theta_1 + \theta_2 \equiv -2\pi/3$ [from Eqs.~\eqref{eq:u1.1-auto} and \eqref{eq:u2.1-auto}] and $-\theta_1 + \theta_2 \equiv 2\pi/3$ [from Eqs.~\eqref{eq:u1.2-auto} and \eqref{eq:u2.2-auto}]. This time we find that $4\pi/3 \equiv 0$, which is again a contradiction.


\section{Appendix H: An alternate proof of the rigidity of CK-31} \label{app:CK31-alternate}


Here, we apply the same techniques used in Appendix~G, this time to prove that CK-31 is rigid.

Consider the following ordering of CK-31:
\begin{equation}
\begin{array}{llllll}
v_{1}=(1, 0, 0), & v_{2}=(0, 1, 0),&
v_{3}=(0, 0, 1), & v_{4}=(1, 1, 0),&
v_{5}=(-1, 1, 0), & v_{6}=(1, 0, 1),\\
v_{7}=(1, 0, -1), & v_{8}=(0, 1, 1),&
v_{9}=(0, -1, 1), & v_{10}=(-2, 1, 0),&
v_{11}=(1, 2, 0), & v_{12}=(2, 0, 1),\\
v_{13}=(-2, 0, 1), & v_{14}=(1, 1, 1),&
v_{15}=(-1, 1, 1), & v_{16}=(1, -1, 1),&
v_{17}=(1, 1, -1), & v_{18}=(0, 2, 1),\\
v_{19}=(0, -2, 1), & v_{20}=(1, 0, 2),&
v_{21}=(1, 0, -2), & v_{22}=(0, 1, 2),&
v_{23}=(0, 1, -2), & v_{24}=(-2, 1, 1),\\
v_{25}=(2, -1, 1), & v_{26}=(1, 2, 1),&
v_{27}=(1, 2, -1), & v_{28}=(1, 1, 2),&
v_{29}=(-1, 1, 2), & v_{30}=(1, -1, 2),\\
v_{31}=(1, 1, -2), & 
\end{array}
\end{equation}
defining the sequence of vectors $v_1,\ldots,v_{31}.$

\begin{lemma} \label{lem:CK37-percolate}
 Let $G$ be the orthogonality graph of CK-31.
 Vertices $1$,$2$,$3$,$17$ $2$-percolate $G$. 
\end{lemma}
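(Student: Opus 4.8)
The plan is to prove this by exhibiting the $2$-neighbor bootstrap percolation process from $A_0=\{1,2,3,17\}$ explicitly, in the style of Table~\ref{tab:19ray-partialpercolation}; this is a finite verification. First I would record the orthogonality graph $G$ of CK-31 from the list $v_1,\dots,v_{31}$: vertices $i$ and $j$ are adjacent precisely when $\langle v_i,v_j\rangle=0$, a routine check over the $\binom{31}{2}$ pairs that is especially simple because every coordinate lies in $\{0,\pm1,\pm2\}$. Here $v_1,v_2,v_3$ are the coordinate vectors $e_1,e_2,e_3$, so $v_i$ (for $i\le 3$) is adjacent to every vertex whose vector vanishes in the $i$-th slot, while $v_{17}=(1,1,-1)$ is adjacent to every $v_j=(a,b,c)$ with $a+b-c=0$.

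I would then run the percolation and tabulate the rounds. Note that $\{1,2,3\}$ alone infects nothing, because a vector orthogonal to two coordinate vectors must be the third --- this is exactly why the seed $v_{17}$ is needed. In Round~1, $v_{17}$ paired with $v_1$, $v_2$, $v_3$ infects $v_8=(0,1,1)$, $v_6=(1,0,1)$, $v_5=(-1,1,0)$, respectively. In Round~2, from $\{1,2,3,5,6,8,17\}$ one infects $v_4$ (neighbours $v_3,v_5$), $v_7$ (neighbours $v_2,v_6$), and $v_9$ (neighbours $v_1,v_8$), so that after two rounds the whole block $v_1,\dots,v_9,v_{17}$ is infected. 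From there the remaining unit-entry vectors $v_{14},v_{15},v_{16}$ are each orthogonal to two already-infected vectors among $v_4,\dots,v_9$, and then the vectors having a $\pm2$ entry (namely $v_{10},\dots,v_{13}$ and $v_{18},\dots,v_{31}$) become infected in turn, each being orthogonal to two vectors produced in earlier rounds. I would continue the table until the infected set equals $\{1,\dots,31\}$, establishing that $A_0$ $2$-percolates $G$.

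The only delicate aspect is the bookkeeping: one must check that the cascade never stalls, i.e.\ that before completion there is always an uninfected vertex with at least two infected neighbours, and in particular that the $\pm2$ vectors, some of which sit in a short cycle of mutual dependencies, do all eventually get infected --- such a cycle is broken as soon as one of its members acquires a second infected neighbour from outside. I do not expect any structural obstacle: the statement is a purely finite check, whose content is that the three coordinate vectors together with $(1,1,-1)$ are ``generic enough'' to regenerate the entire orthogonality graph of CK-31 by repeatedly forming mutual orthocomplements; once established it is what makes the symbolic version of this percolation (with variable vectors placed on $A_0$) a viable route to the rigidity of CK-31, exactly as in Appendix~G. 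To guard against arithmetic slips I would present the explicit round-by-round table and independently confirm it by running the percolation rule on the adjacency list of $G$, using the same routine applied to the $19$-element graph in Appendix~G.
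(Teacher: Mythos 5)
Your proposal is correct and takes essentially the same approach as the paper: it exhibits the $2$-neighbor bootstrap percolation process round by round starting from $A_0=\{1,2,3,17\}$ (the paper's proof is simply a pointer to such a round-by-round table, Table~\ref{tab:CK31-percolation-partial}). Your Round~1 infections ($5,6,8$) and Round~2 infections ($4,7,9$) match the paper exactly; the paper's table additionally picks up $25$ and $28$ in Round~2 and then finishes in five rounds, a bookkeeping detail that your proposed explicit table would reproduce.
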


\begin{proof}
 See Table \ref{tab:CK31-percolation-partial}. At each round, the new vertices introduced each have at least two infected neighbors. 
\end{proof}

Note that one of the vertices $1$,$2$,$3$ can also be removed from the initial infected set. We keep each of them for the sake of convenience.


\begin{table} \label{tab:percolation:partial}
 \begin{tabular}{cccc}
 \hline \hline
 Round & Vertex & Infected neighbors & Vector \\
 \hline 
 \multirow{ 4}{*}{$0$} 
 & $1$ & & $(1, 0, 0)$ \\
 & $2$ & & $(0, 1, 0)$\\
 & $3$ & & $(0, 0, 1)$\\
 & $17$ & & $(1, b, c)$ \\
 \hline 
 \multirow{ 3}{*}{$1$} 
 & $5$ & $3,17$ & $(1, -\frac{1}{b^*}, 0)$ \\
 & $6$ & $2,17$ & $(1, 0, -\frac{1}{c^*})$ \\
 & $8$ & $1,17$ & $(0, 1, -\frac{b^*}{c^*})$ \\
 \hline 
 \multirow{ 5}{*}{$2$} 
 & $4$ & $3,5$ & $(1, b, 0)$ \\
 & $7$ & $2,6$ & $(1, 0, c)$ \\
 & $9$ & $1, 8$ & $(0, 1, \frac{c}{b})$ \\
 & $25$ & $8,17$ & $(1, b, c)$ \\
 & $28$ & $5,17$ & \\
 \hline
 \multirow{ 7}{*}{$3$} 
 & $11$ & $3,25$ \\
 & $13$ & $2,28$ \\
 & $14$ & $5,7,9$ \\
 & $15$ & $4,6,9$ \\
 & $16$ & $4,7,8$ \\
 & $19$ & $1,28$ \\ 
 & $21$ & $2,25$ \\
 \hline 
 \multirow{ 10}{*}{$4$} 
 & $10$ & $3,11$ \\
 & $12$ & $2,21$ \\
 & $20$ & $2,13$ \\ 
 & $22$ & $1,19$ \\
 & $24$ & $9,11,14$ \\
 & $26$ & $7,16$ \\
 & $27$ & $6,15$ \\
 & $29$ & $4,16,19$ \\
 & $30$ & $4,13,15$ \\
 & $31$ & $5,14 $\\
 \hline 
 \multirow{ 2}{*}{$5$} 
 & $18$ & $1,30,31$ \\
 & $23$ & $1,26$ \\
 \hline \hline
 \end{tabular}
 \caption{The $2$-neighbor bootstrap percolation process on the orthogonality graph of CK-31 starting with $A_0 = \{1,2,3,17\}$. For each round from $1$ to $5$, we indicate the newly infected vertices. We also indicate the initial infected set by Round 0, and the vectors obtained before restrictions must be placed on $b,c$.}
 \label{tab:CK31-percolation-partial}
\end{table}


We now prove that CK-31 is rigid. Denote its orthogonality graph by $G$, and its vertices by $z_1, \ldots, z_{31}$. Let us consider some other KS $K' = \{w_1, \ldots, w_{31}\}$ that also has orthogonality graph $G$ (so $w_j$ and $w_k$ are orthogonal if, and only if, $z_j$ and $z_k$ are adjacent vertices in $G$).
Since vertices $z_1, z_2, z_3$ are pairwise adjacent, they must correspond to an orthogonal basis in $\mathbb{C}^d$. Therefore, we may assume that they are the standard basis vectors $(1,0,0)$, $(0,1,0)$, $(0,0,1)$, respectively (by just applying the appropriate unitary). Let us denote by $(a,b,c)$ the vector corresponding to vertex $z_{17}$. Since $z_{17}$ is not adjacent to $z_1$, $a \neq 0$, and so we can assume that $a=1$. One can then determine the vectors (in terms of the variables $b,c$) corresponding to the vertices obtained in the first two rounds of the percolation process ($z_5,z_6,z_8$ for round 1 and $z_4,z_7,z_9,z_{25},z_{28}$ for round 2).

Vertex $z_{14}$ is orthogonal to each of $z_5,z_7,z_9$, and so its vector is orthogonal to each. Therefore, we find that the matrix whose rows are $w_{5}^*, w_{7}^*, w_{9}^*$ must have determinant $0$. That is,
\begin{align}
 \begin{vmatrix}
 1 & -\frac{1}{b} & 0 \\
 1 & 0 & c^* \\
 0 & 1 & \frac{c^*}{b^*} \\
 \end{vmatrix} = 0, 
\end{align}
and so we obtain the equation
\begin{equation} 
 -c^*\left(\frac{1}{|b|^2} - 1\right) = 0.
\end{equation}
Therefore, we either have that $c = 0$ or that $|b| = 1$. The solution $c = 0 $ may be eliminated by the fact that $v_{17}$ is not orthogonal to $v_3$, and so we find that $|b| = 1$.

Similarly, from $z_{16}$, we find that the matrix whose rows are $w_{4}^*, w_{7}^*, w_{8}^*$ has determinant $0$. That is,
\begin{align}
 \begin{vmatrix}
 1 & b^* & 0 \\
 1 & 0 & c^* \\
 0 & 1 & -\frac{b}{c}
\end{vmatrix} = 0, 
\end{align}
and so we obtain the equation
\begin{equation}
 |b|^2 = |c|^2.
\end{equation}
Since we concluded earlier that $|b| = 1$, it follows that $|c| = 1 $ as well. Therefore, we may denote the vector corresponding to vertex 13 by $(1, e^{i\phi}, e^{i\gamma})$ for some angles $0 \leq \phi, \gamma < 2\pi$. In Table~\ref{tab:generators}, we show the vectors corresponding to the $2$-percolating set $1,2,3,17$ for CK-31 and for $K'$. In Table~\ref{tab:CK31-percolation}, we illustrate the entire percolation process and the resulting KS set.


\begin{table}[]
 \centering
 \begin{tabular}{cccc}
 \hline \hline
 $j$ & $v_j$ & $w_j$ & $\theta_j$\\
 \hline 
 $1$ & $(1,0,0)$ & $(1, 0, 0)$ & $0$\\
 $2$ & $(0,1,0)$ & $(0, 1, 0)$ & $-\phi$ \\
 $3$ & $(0,0,1)$ & $(0, 0, 1)$ & $-\gamma + \pi$ \\
 $17$ & $(1,1,-1)$ & $(1, e^{i\phi}, e^{i\gamma})$ & $0$ \\
 \hline \hline
 \end{tabular}
 \caption{The sequence of vectors $v_j$ determining CK-31 and the set of vectors $w_j$ determining $K'$ are phase-unitary equivalent. For each $j \in \{1,2,3,17\}$, $w_j = e^{i\theta_j}T(v_j)$, where $T$ is the unitary defined in Eq.~\eqref{eq:the-unitary}.}
 \label{tab:generators}
\end{table}


\begin{lemma}
 The sequences $v_{1}, v_2, v_3, v_{17}$ and $ w_{1}, w_2, w_3, w_{17}$ are phase-unitary equivalent.
\end{lemma}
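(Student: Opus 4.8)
The plan is to exhibit the phase-unitary equivalence explicitly; once the preceding reduction is in hand, no further structural argument is required. Recall that we have already normalized so that $w_1, w_2, w_3$ are the standard basis vectors $(1,0,0), (0,1,0), (0,0,1)$, and that the two determinant conditions extracted from vertices $z_{14}$ and $z_{16}$ of the percolation process forced $|b| = |c| = 1$, so the vector attached to $z_{17}$ in $K'$ may be written $w_{17} = (1, e^{i\phi}, e^{i\gamma})$ with $0 \le \phi, \gamma < 2\pi$. What remains is only to produce one unitary and four phases realizing $w_j = e^{i\theta_j} T(v_j)$ for $j \in \{1, 2, 3, 17\}$.

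First I would introduce the diagonal unitary
\begin{equation} \label{eq:the-unitary}
 T = \begin{pmatrix} 1 & 0 & 0 \\ 0 & e^{i\phi} & 0 \\ 0 & 0 & -e^{i\gamma} \end{pmatrix},
\end{equation}
which is unitary because its diagonal entries all have modulus one. Then I would check, componentwise, the four identities with the angles recorded in Table~\ref{tab:generators}: $T(v_1) = w_1$, so $\theta_1 = 0$; $T(v_2) = (0, e^{i\phi}, 0)$, so $e^{-i\phi} T(v_2) = w_2$ and $\theta_2 = -\phi$; $T(v_3) = (0, 0, -e^{i\gamma})$, so $e^{i(\pi - \gamma)} T(v_3) = w_3$ and $\theta_3 = \pi - \gamma$; and $T(v_{17}) = T(1, 1, -1) = (1, e^{i\phi}, e^{i\gamma}) = w_{17}$, so $\theta_{17} = 0$. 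Each line is a single arithmetic step.

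I do not anticipate any real obstacle: the substance of the lemma --- that the orthogonality relations pin $w_{17}$ down to the form $(1, e^{i\phi}, e^{i\gamma})$ --- was already settled in the two determinant computations preceding the lemma, and packaging that data as a unitary plus phases is precisely what the definition of phase-unitary equivalence demands. The one point I would make explicit is that a single $T$ handles all four vectors at once: $T$ is built to carry $v_{17}$ to $w_{17}$ while fixing each coordinate axis up to a phase, and $v_1, v_2, v_3$ lie along those axes, so the phases $\theta_1, \dots, \theta_{17}$ simply absorb the leftover scalars.
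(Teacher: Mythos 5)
Your proof is correct and takes the same approach as the paper: exhibit the diagonal unitary $T = \mathrm{diag}(1, e^{i\phi}, -e^{i\gamma})$ and verify the four identities componentwise. Your phases $\theta_2 = -\phi$ and $\theta_3 = \pi - \gamma$ agree with the paper's Table~\ref{tab:generators} and are in fact the correct ones; the paper's own proof text writes $\theta_2 := -\phi + \pi$ and $\theta_3 := -\gamma$, which appears to be a typographical slip (the $\pi$ belongs with $\theta_3$ to cancel the sign from the $-e^{i\gamma}$ entry of $T$, not with $\theta_2$), so you have actually cleaned up a minor error.
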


\begin{proof}
 Define the unitary $T$ via the matrix
 \begin{equation} \label{eq:the-unitary}
 T = 
 \begin{pmatrix}
 1 & 0 & 0 \\
 0 & e^{i\phi} & 0 \\
 0 & 0 & -e^{i\gamma}
 \end{pmatrix},
 \end{equation}
 and angles $\theta_{1} := 0$, $\theta_{2} := -\phi + \pi$, and $\theta_3 := -\gamma, \theta_{17} = 0$. Then, $w_j = e^{i\theta_j}T(v_j)$ for each $j \in \{1,2,3,17\}$.
\end{proof}

By Lemmas \ref{prop:percolate} and \ref{lem:CK37-percolate}, it follows that the normalized version of CK-31 (i.e., the sequence of normalized vectors $v_1/|v_1|,\ldots,v_{31}/|v_{31}|$) and the normalized version of $K'$ are phase-unitary equivalent. Thus we have proven our main result. \\

{\em Theorem 1.} CK-31 is rigid. \\

The technique described in this section and the previous is general. For example, we have used it to confirm the rigidity of the minimal SI-C set, and also to confirm the non-rigidity of Peres-33 and Penrose-33 \cite{gould2010isomorphism, bengtsson2012gleason,Xu:2024PRL}.
Moreover, one can generate KS sets from the orthogonality graph (i.e., compute orthogonal representations) in this manner --- choosing a small percolating set, assigning vectors, and percolating. This process yields not only some KS set with these orthogonalities, but the general form of \emph{any} KS set satisfying the orthogonalities. This may prove to be useful practically since percolating sets can be significantly smaller than the KS sets they percolate (in the case of CK-31 we used only 4 of the 31 vectors in the percolating set).


\begin{table*}
 \begin{tabular}{ccccc}
 \hline \hline
 Round & Vertex & Infected neighbors & KS vector & Infected graph \\
 \hline 
 & & & & \multirow{9}{*}{\includegraphics[width=0.25\textwidth]{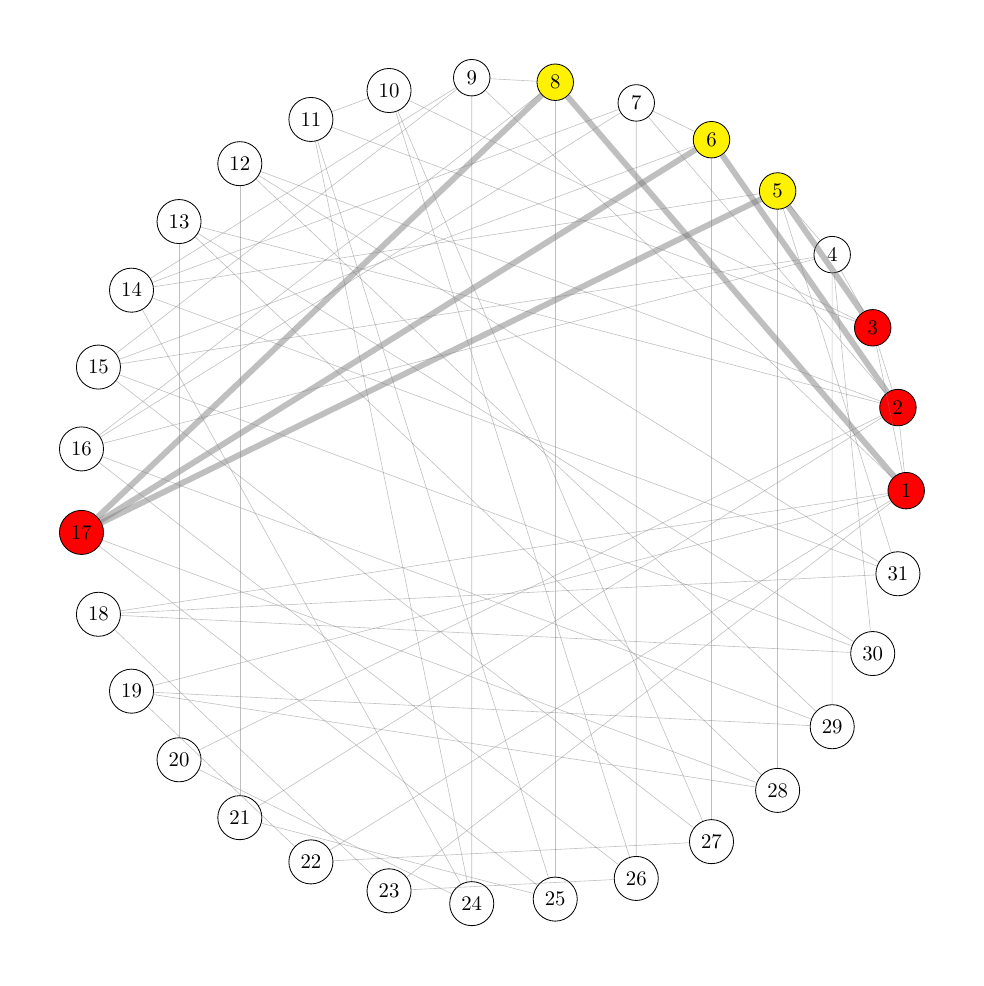}}\\
 \multirow{ 8}{*}{$0$} 
 & $1$ & & $(1, 0, 0)$ & \\
 & & & & \\
 & $2$ & & $(0, 1, 0)$\\
 & & & & \\
 & $3$ & & $(0, 0, 1)$\\
 & & & & \\
 & $17$ & & $(1, e^{i\phi}, e^{i\gamma})$ \\
 & & & & \\
 \hline 
 & & & & \multirow{9}{*}{\includegraphics[width=0.25\textwidth]{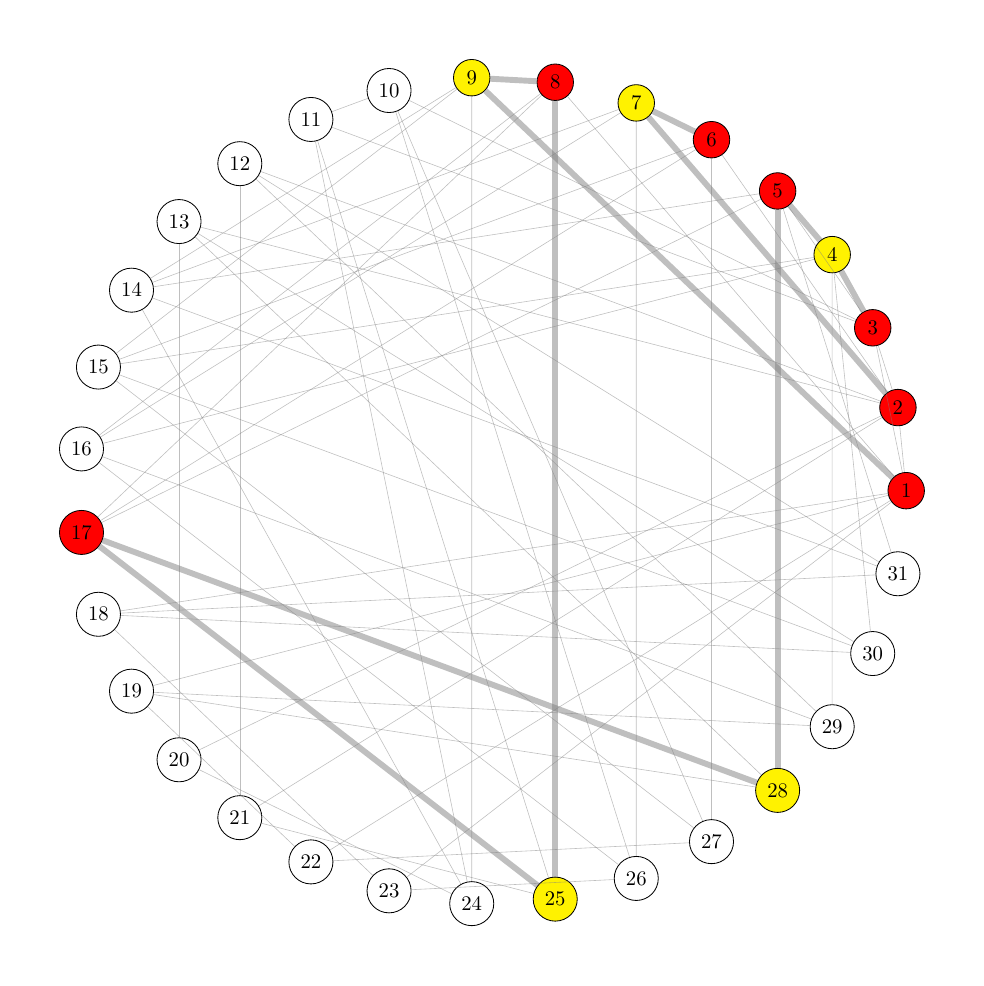}} \\
 \multirow{ 8}{*}{$1$} 
 & & & & \\
 & $5$ & $3,17$ & $(1, -e^{i\phi}, 0)$ \\
 & & & & \\
 & $6$ & $2,17$ & $(1, 0, -e^{i\gamma})$ \\
 & & & & \\
 & $8$ & $1,17$ & $(0, 1, -e^{-i(\phi - \gamma)})$ \\
 & & & & \\
 & & & & \\
 \hline 
 & & & & \multirow{9}{*}{\includegraphics[width=0.25\textwidth]{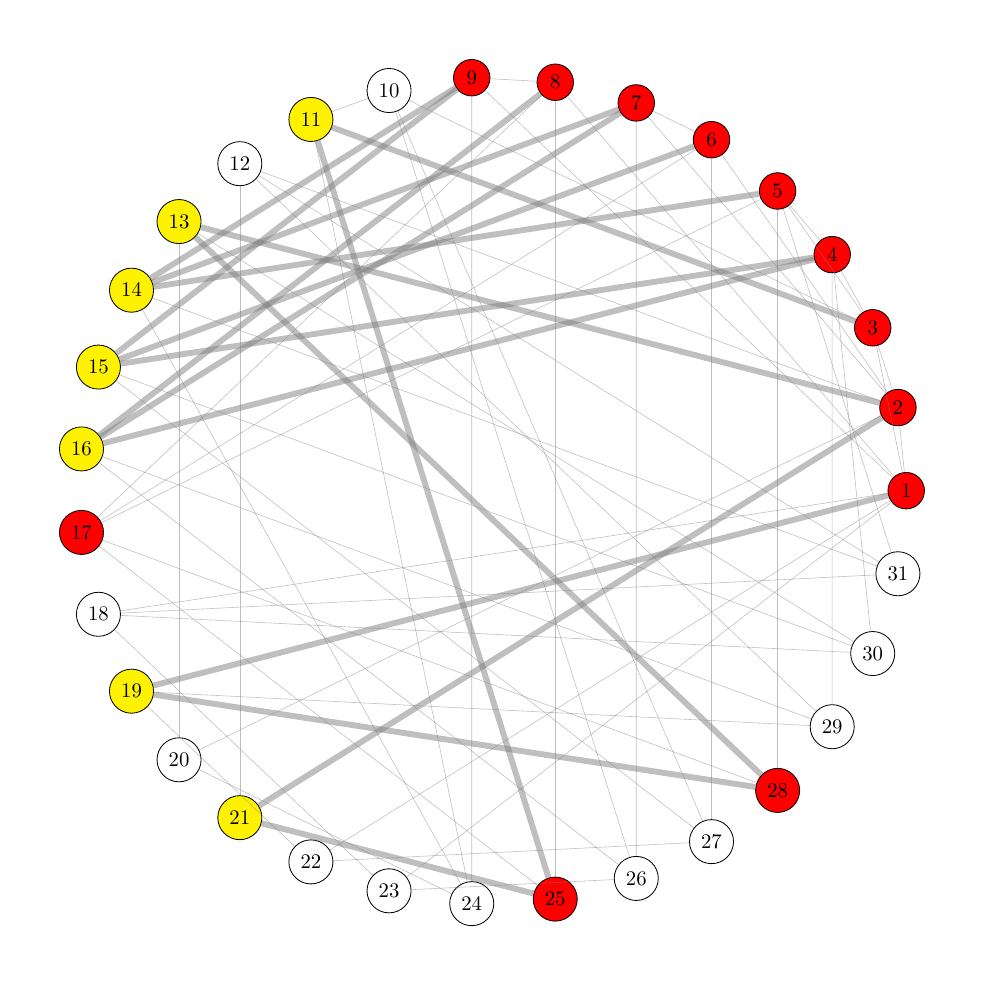}} \\[-5pt]
 \multirow{ 7}{*}{$2$} 
 & $4$ & $3,5$ & $(1, e^{i\phi}, 0)$ & \\[-2pt]
 & & & & \\[-2pt]
 & $7$ & $2,6$ & $(1, 0, e^{i\gamma})$ \\[-3pt]
 & & & & \\[-2pt]
 & $9$ & $1, 8$ & $(0, 1, e^{-i(\phi + \gamma)})$ \\[-2pt]
 & & & & \\[-2pt]
 & $25$ & $8,17$ & $(1, -e^{i\phi}/2, -e^{i\gamma}/2)$ \\[-2pt]
 & & & & \\[-2pt]
 & $28$ & $5,17$ & $(1, e^{i\phi}, -2e^{i\gamma}) $& \\[-2pt]
 & & & & \\[-2pt]
 \hline
 & & & & \multirow{9}{*}{\includegraphics[width=0.25\textwidth]{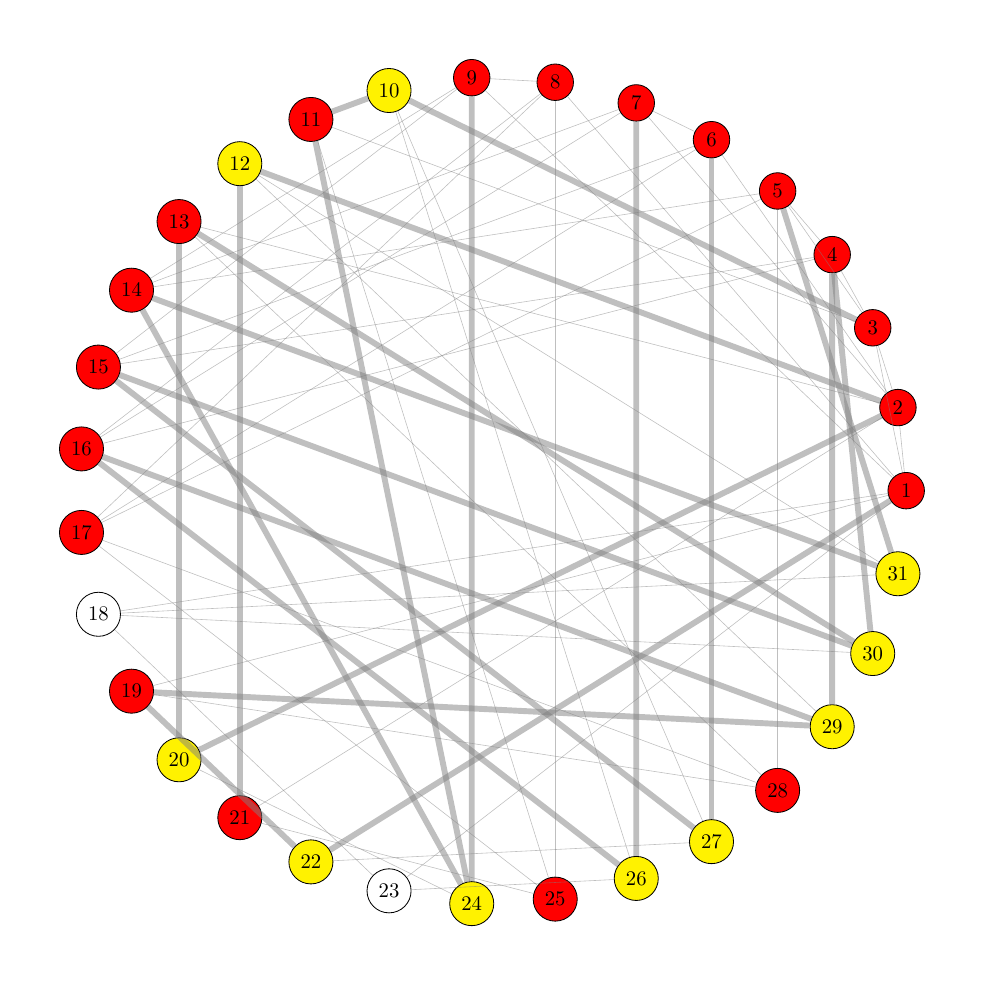}} \\[-5pt]
 \multirow{ 8}{*}{$3$} 
 & $11$ & $3,25$ & $(1, 2e^{i\phi}, 0)$ & \\
 & & & & \\[-12pt]
 & $13$ & $2,28$ & $(1, 0, e^{i\gamma}/2)$ \\
 & & & & \\[-12pt]
 & $14$ & $5,7,9$ & $(1, e^{i\phi}, -e^{i\gamma})$\\
 & & & & \\[-12pt]
 & $15$ & $4,6,9$ & $(1, -e^{i\phi}, e^{i\gamma})$ \\
 & & & & \\[-12pt]
 & $16$ & $4,7,8$ & $(1, -e^{i\phi}, -e^{i\gamma})$ \\
 & & & & \\[-12pt]
 & $19$ & $1,28$ & $(0, 1, e^{-i(\phi - \gamma)}/2)$\\ 
 & & & & \\[-12pt]
 & $21$ & $2,25$ & $(1, 0, 2e^{i\gamma})$ \\
 & & & & \\[-5pt]
 \hline 
 & & & & \multirow{11}{*}{\includegraphics[width=0.25\textwidth]{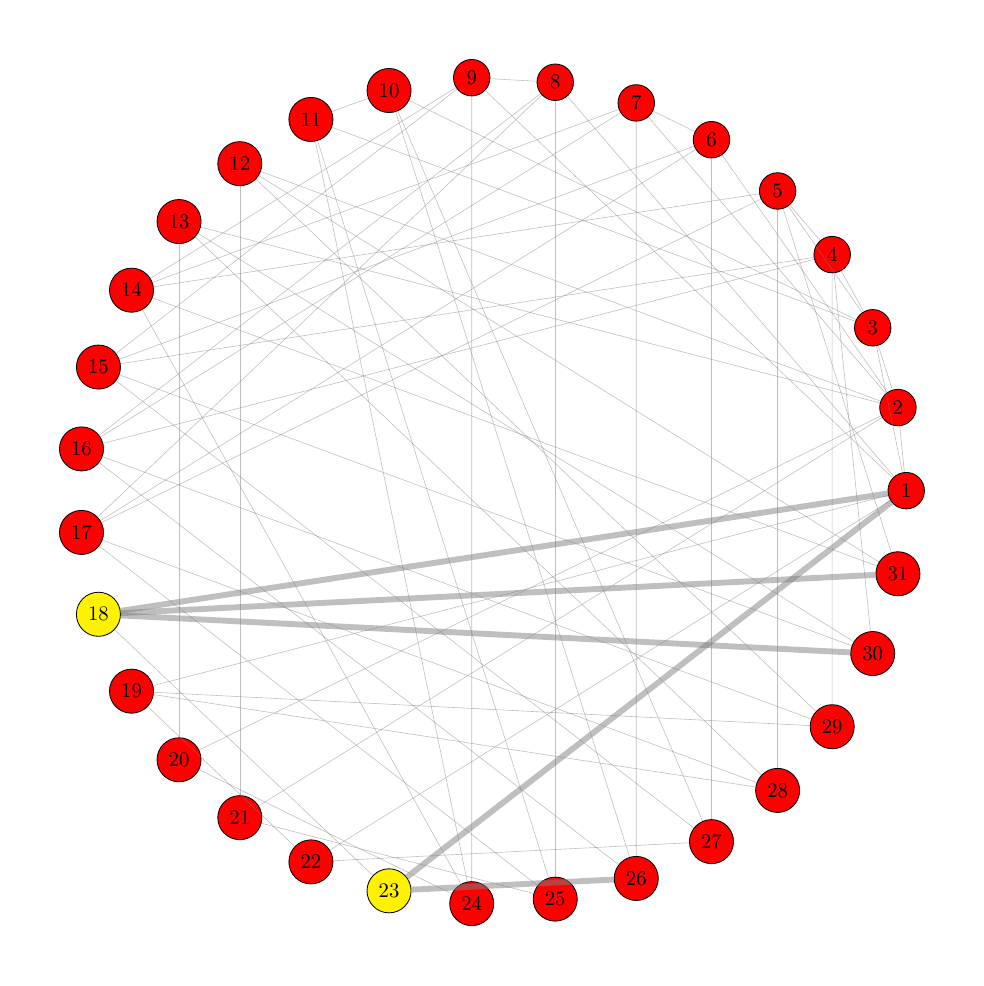}} \\[-10pt]
 \multirow{ 10}{*}{$4$} 
 & $10$ & $3,11$ & $(1, -e^{i\phi}/2, 0)$ \\
 & & & & \\[-12pt]
 & $12$ & $2,21$ & $(1, 0, -e^{i\gamma}/2)$\\
 & & & & \\[-12pt]
 & $20$ & $2,13$ & $(1, 0, -2e^{i\gamma})$\\ 
 & & & & \\[-12pt]
 & $22$ & $1,19$ & $(0, 1, -2e^{-i(\phi -\gamma)})$\\
 & & & & \\[-12pt]
 & $24$ & $9,11,14$ & $(1, -e^{i\phi}/2, e^{i\gamma}/2)$ \\
 & & & & \\[-12pt]
 & $26$ & $7,16$ & $(1, 2e^{i\phi}, -e^{i\gamma})$ \\
 & & & & \\[-12pt]
 & $27$ & $6,15$ & $(1, 2e^{i\phi}, e^{i\gamma})$ \\
 & & & & \\[-12pt]
 & $29$ & $4,16,19$ & $(1, -e^{i\phi}, 2e^{i\gamma})$ \\
 & & & & \\[-12pt]
 & $30$ & $4,13,15$ & $(1, -e^{i\phi}, -2e^{i\gamma})$\\
 & & & & \\[-12pt]
 & $31$ & $5,14$ & $(1, e^{i\phi}, 2e^{i\gamma})$ \\
 & & & & \\[-12pt]
 \hline 
 \multirow{ 2}{*}{$5$} 
 & $18$ & $1,30,31$ & $(0, 1, -e^{-i(\phi - \gamma)}/2)$ \\
 & $23$ & $1,26$ & $(0, 1, 2e^{-i(\phi - \gamma)})$ \\
 \hline \hline
 \end{tabular}
 \caption{$2$-neighbor bootstrap percolation process on the orthogonality graph of CK-31, starting with $A_0 = \{1,2,3,17\}$. For each round from $1$ to $5$, we indicate the newly infected vertices. We also indicate the initial infected set by Round 0.}
 \label{tab:CK31-percolation}
\end{table*}


\newpage 

\mbox{}



%


\end{document}